\definecolor{darkblue}{rgb}{0,0,0.5}
\newcolumntype{Y}{>{\centering\arraybackslash}X}
\newtheorem{proposition}{Proposition}[section]
\theoremstyle{definition}
\newtheorem{remark}{Remark}[section]
\newcommand{\reels}{\mathbb{R}}
\newcommand{\proba}{P}
\newcommand{\argmax}{\operatorname*{arg \; max}}
\DeclareMathOperator{\dd}{\mbox{}}
\newcommand{\dt}{\dd{}\mathrm{d}t}
\title{\textbf{Theoretical analysis and improvements in cubic transmutations of probability distributions}}
\author[a]{Issa Cherif Geraldo}
\author[b]{Edoh Katchekpele}
\author[b]{Tchilabalo Abozou Kpanzou}
\affil[a]{\footnotesize Laboratoire d'Analyse, de Mod\'elisations Math\'ematiques et Applications (LAMMA), D\'epartement de Math\'ematiques, Facult\'e des Sciences, Université de Lomé, 1 B.P. 1515 Lom\'e 1, Togo.}
\affil[b]{\footnotesize Laboratoire de Mod\'elisation Math\'ematique et d'Analyse Statistique Décisionnelle (LaMMASD), D\'epartement de Math\'ematiques, Facult\'e des Sciences et Techniques, Universit\'e de Kara, Kara, Togo.}
\date{}
\begin{document}
	
\renewcommand{\proofname}{\textbf{\normalshape Proof}}

\vspace{-2cm}
\maketitle

\begin{abstract}
	In statistics, processed data are becoming increasingly complex, and classical probability distributions are limited in their ability to model them. This is why, to better model data, extensive work has been conducted on extending classical probability distributions. Generally, this extension is achieved by transforming the cumulative distribution function of a baseline distribution through the addition of one or more parameters to enhance its flexibility. Cubic transmutation (CT) is one of the most popular methods for such extensions. However, CT does not have a unique definition because different approaches for CT have been proposed in the literature but are yet to be compared. The main goal of this paper is to compare these different approaches from both theoretical and empirical viewpoints. We study the relationships between the different approaches and we propose modified versions based on the extension of parameter ranges. The results are illustrated using Pareto distribution as baseline distribution. \\
	 
	\noindent \textbf{Keywords} -- Cubic transmutation, Pareto distribution, parameter estimation, maximum likelihood, numerical optimization. \\
	
	\noindent \textbf{AMS Subject Classification (MSC2020)} -- 60E05, 62E15, 62F10, 62F99, 62P99.
\end{abstract}

\section{Introduction}

In statistics, the data to be processed are becoming increasingly complex, and classical probability distributions are limited in their ability to model them. This is why, in order to better model the data, considerable work has focused on extending classical probability distributions. Generally, this extension is achieved by transforming a given baseline cumulative distribution function (cdf) into another by adding one or more new parameters to enhance its flexibility. This is how, over the past few decades, scientific research has seen the emergence of new families of probability distributions. The reader can consult references such as \citet{ahmad-et-al-2019, ahmad-et-al-2022} and \citet{tahir-cordeiro-2016}.

Among the techniques for extending probability distributions, transmutation is one of the most used. The first form of transmutation is the quadratic transmutation introduced by \cite{shaw-buckley-2009}. It consists of transforming a baseline cdf $G(x)$ (that can depend on one or more parameters) in the form of a new cdf defined for all $x \in \reels$ by
\begin{equation}
	\label{eq:quad-trans}
	F_S(x) = (1+\lambda) G(x) - \lambda G^2(x),
\end{equation}
where $\lambda$ is an additional parameter such that $\vert \lambda \vert \leqslant 1$. It can be noticed that the value $\lambda = 0$ enables to get the baseline cdf $G(x)$. Several authors (see \cite{tahir-cordeiro-2016}, \cite{dey-et-al-2021}, \cite{imliyangba-et-al-2021}) have applied the formula \eqref{eq:quad-trans} to obtain quadratic transmuted distributions and verified on real data that these latter fit the data better than the baseline distributions. Later, \citet{granzotto-et-al-2017} showed that the formula \eqref{eq:quad-trans} represents the cdf of a mixture of order statistics. But this quadratic transmutation does not adapt to certain complex data \citep{rahman-et-al-2020a,saha-et-al-2024}. Thus, in order to obtain more flexibility, \citet{granzotto-et-al-2017} then extended it to develop cubic transmutation (CT) through the new cdf
\begin{equation}
	F_G(x) = \lambda_1 G(x) + (\lambda_2 - \lambda_1) G^2(x) + (1 - \lambda_2) G^3(x),
\end{equation}
where $\lambda_1$ and $\lambda_2$ are additional parameters such that $0 \leqslant \lambda_1 \leqslant 1$ and $-1 \leqslant \lambda_2 \leqslant 1$. They applied the CT to Weibull and log-logistic distributions on real data and found that the cubic transmutations presented better fits to the data than the quadratic transmutation and the baseline distribution.

Following \citet{granzotto-et-al-2017}, several authors have developed other cubic transmutation formulas. We can mention \citet{alkadim-mohammed-2017}, \citet{rahman-et-al-2018a}, \citet{rahman-et-al-2018b}, \cite{rahman-et-al-2019b} and \citet{rahman-et-al-2023}. For each formula, the authors verified using real data that the proposed cubic transmutation fit the data better than the quadratic transmutation. Some of the cubic transmutation formulas have been compared to each other but all these six cubic transmutation formulas have yet to be all compared with each other. Some have been compared individually to the formulas of \cite{granzotto-et-al-2017} and/or \cite{alkadim-mohammed-2017} for Weibull, Pareto, log-logistic, power function, Kumaraswamy distributions \citep{aslam-et-al-2020,rahman-et-al-2019a,rahman-et-al-2021,rahman-et-al-2024,hafeez-et-al-2023,tushar-et-al-2024}. \citet{sakthivel-vidhya-2023} applied four of these six cubic transmutation formulas on Rayleigh distribution, gave their statistical properties and compared the four different formulas for fitting three datasets. However, the four formulas were not simultaneously compared with each other on the same data set. Moreover, the theoretical aspects of the comparison were also not considered.

The objective of this paper is to provide a comparative analysis of different cubic transmutations and to present a case study using cubic transmutations of the Pareto distribution. The paper is organized as follows. In Section \ref{sec:review-CT}, we review the state of the art regarding cubic transmutation formulas. Section \ref{sec:results} presents a critical analysis of the various formulas, focusing particularly on the parameter ranges. This analysis will lead to the proposal of modified parameter ranges that extend the obtained distributions. We will also examine the formulas to determine if some of the studied families of distributions can be obtained as subfamilies of the others. Section \ref{sec:illustration} presents an illustration of our results by comparing the performances of different cubic transmuted Pareto distributions in fitting real data in R software \citep{R-2024}. The paper concludes with a discussion and final remarks in Section \ref{sec:conclusion}.

\section{Cubic transmutation: a review of existing formulas}
\label{sec:review-CT}

Let $x \in \reels$ and $G(x)$ be a cumulative density function (cdf) of a given continuous random variable. It is possible that $G(x)$ depends on one real parameter $\xi \in \reels$ or a parameter vector $\xi \in \reels^d$, where $d \geqslant 2$. But for the sake of notation, we will ignore $\xi$ in the notation. According to the literature \citep[see for example][]{rahman-et-al-2020b,ali-athar-2021,imliyangba-et-al-2021}, two methods have been used to obtain a new cdf $F(x)$ called "cubic transmutation" of $G(x)$: 
\begin{list}{$\bullet$}{}
	\item the first one is based on the use of order statistics \citep{granzotto-et-al-2017,rahman-et-al-2018a,rahman-et-al-2018b};
	\item the second one \citep{alkadim-mohammed-2017,rahman-et-al-2019b,rahman-et-al-2023} is based on the use of the following formula which is a special case of a more general formula proposed by \citet{alzaatreh-et-al-2013}:
	\begin{equation}
		\label{eq:cdf-alzaatreh}
		F(x) = R(G(x)) = \int_{0}^{G(x)} r(t) \dt,
	\end{equation}
	where $R(t)$ is a polynomial cdf of degree 3 increasing, right-continuous such that $R(0)=0$ and $R(1)=1$ and $r(t)$ is the polynomial probability density function (pdf) of degree 2 linked to $R(t)$.
\end{list}

Note that if $g(x)$ is the pdf linked to $G(x)$, then the pdf linked to the cdf $F(x)$ is 
\begin{equation}
	\label{eq:pdf-alzaatreh}
	f(x) = F'(x) = g(x) \, r(G(x)).
\end{equation}

In the remainder of this section, we review six cubic transmutation formulas based on one or the other of the two methods.

\subsection{First formula: \citet{granzotto-et-al-2017}}

According to \citet{saha-et-al-2024}, the first cubic transmutation formula is due to \citet{granzotto-et-al-2017}. The latter authors proved that the quadratic transmutation formula \eqref{eq:quad-trans} of \citet{shaw-buckley-2009} could be rewritten as the cdf of a mixture of order statistics. Indeed, if $X_1$ and $X_2$ are two independent random variables sampled from the probability distribution with cdf $G(x)$ and if $X_{1:2}$, $X_{2:2}$ represent the corresponding order statistics, then the mixture random variable
\begin{equation} 
	X = \begin{cases} 
		X_{1:2} & \text{with probability $\pi$}, \\
		X_{2:2} & \text{with probability $1-\pi$},
	\end{cases}
\end{equation}
where $0 \leqslant \pi \leqslant 1$, has the cdf 
\begin{align*} 
	F_S(x) & = \pi \proba(X_{1:2} \leqslant x) + (1-\pi) \proba(X_{2:2} \leqslant x) \\
	& = \pi \left[ 1 - (1-G(x))^2 \right] + (1-\pi) G^2(x) \\
	& = 2\pi G(x) + (1-2\pi) G^2(x) \\
	& = (1+\lambda) G(x) - \lambda G^2(x),
\end{align*} 
where $\lambda = 2\pi-1$ is such that $-1 \leqslant \lambda \leqslant 1$. They extended this reasoning to three variables to obtain their cubic transmutation formula. Indeed, if $X_1$, $X_2$ and $X_3$ are three independent random variables sampled from the probability distribution with cdf $G(x)$ and if $X_{1:3}$, $X_{2:3}$, $X_{3:3}$ represent the corresponding order statistics, then the mixture random variable
\begin{equation} 
	X = \begin{cases} 
		X_{1:3} & \text{with probability $\pi_1$}, \\
		X_{2:3} & \text{with probability $\pi_2$}, \\
		X_{3:3} & \text{with probability $\pi_3$},
	\end{cases}
\end{equation}
where for all $i=1,2,3$, $0 \leqslant \pi_i \leqslant 1$ and $\pi_1 + \pi_2 + \pi_3 = 1$, has the cdf 
\begin{equation*} 
	F_G(x) = \pi_1 \proba(X_{1:3} \leqslant x) + \pi_2 \proba(X_{2:3} \leqslant x) + \pi_3 \proba(X_{3:3} \leqslant x),
\end{equation*}
where, according to the formula of the cdf of order statistics (see for example, \citet{david-mishriky-1968}), 
\begin{equation} 
	\proba(X_{i:3} \leqslant x) = \sum_{r=i}^{3} C^r_3 G^r(x) (1-G(x))^{3-r}.
\end{equation}
So, \citet{granzotto-et-al-2017} obtained
\begin{equation}
	\label{eq:cdf-granz-0}
	F_G(x) = 3\pi_1 G(x) + 3(\pi_2 - \pi_1) G^2(x) + (1 - 3\pi_2) G^3(x)
\end{equation}
which may be rewritten
\begin{equation}
	\label{eq:cdf-granz} 
	F_G(x) = \lambda_1 G(x) + (\lambda_2 - \lambda_1) G^2(x) + (1 - \lambda_2) G^3(x),
\end{equation} 
where $\lambda_1 = 3\pi_1$ and $\lambda_2 = 3\pi_2$. Notice that setting $\lambda_1 = \lambda_2 = 1$ enables to get the baseline distribution with cdf $G(x)$. The range proposed by \citet{granzotto-et-al-2017} for parameters $\lambda_1$ and $\lambda_2$ is 
\begin{equation}
	\label{eq:range-granz} 
	\mathscr{S}_G = \Big\{ (\lambda_1,\lambda_2) : 0 \leqslant \lambda_1 \leqslant 1 \;\; \text{and} \;\; -1 \leqslant \lambda_2 \leqslant 1 \Big\}.
\end{equation}
The corresponding pdf is 
\begin{equation}
	\label{eq:pdf-granz} 
	f_G(x) = g(x) \left[ \lambda_1 + 2(\lambda_2 - \lambda_1) G(x) + 3(1 - \lambda_2) G^2(x) \right].
\end{equation}
The cubic transmutation of \citet{granzotto-et-al-2017} will be denoted $CT_G$.

\begin{remark}
	It is worth mentioning that the $CT_G$ is not a generalization of quadratic transmutation. Indeed, the formula \eqref{eq:quad-trans} can be obtained from the formula \eqref{eq:cdf-granz} only by setting $\lambda_1 = 1+\lambda$ and $\lambda_2 = 1$. By doing so, we would have $0 \leqslant 1+\lambda \leqslant 1$ or, equivalently, $-1 \leqslant \lambda \leqslant 0$ which is different from the parameter range of quadratic transmutation. 
\end{remark}

\begin{remark}
	\citet{aslam-et-al-2018} extended the cdf formula \eqref{eq:cdf-granz} by replacing $G(x)$ by $G(x)^{\alpha}$, where $\alpha > 0$. They obtained a new family which they also called transmuted family of distributions. However this can be considered as a compounding of transmutation and exponentiation \citep{ali-et-al-2007} so it will not be considered in this study. \citet{riffi-2019} also proposed new families of distributions called higher rank transmutation (HRT) maps to generate more flexible and tractable lifetime families of probability models. \citet{merovci-et-al-2016} introduced a new family called another generalized transmuted (AGT) family of distributions by replacing $G(x)$ by $1-(1-G(x))^{\alpha}$ in \eqref{eq:quad-trans}. \citet{nofal-et-al-2017} proposed the generalized transmuted (GT) family of distributions by replacing $G(x)$ by $G^a(x)$ and $G^2(x)$ by $G^{a+b}(x)$ in \eqref{eq:quad-trans}, where $a$ and $b$ are positive parameters. But as for \citet{aslam-et-al-2018}, AGT, HRT and GT can be considered as compounding of transmutation and exponentiation so they do not fit with the scope of this study. 
\end{remark}

\subsection{Second formula: \citet{alkadim-mohammed-2017}}

The cubic transmutation of \citet{alkadim-mohammed-2017} (that we will denote $CT_{A}$) corresponds to the definition of the cdf 
$$
R(t) = (1+\lambda) t - 2\lambda t^2 + \lambda t^3
$$
for $t \in [0,1]$ and the use of Formula \eqref{eq:cdf-alzaatreh}. For a baseline cdf $G(x)$, it corresponds to the cdf  
\begin{equation}
	\label{eq:cdf-alkdm}
	F_A(x) = (1+\lambda) G(x) - 2\lambda G^2(x) + \lambda G^3(x),
\end{equation}
where $-1 \leqslant \lambda \leqslant 1$. The corresponding pdf is
\begin{equation}
	\label{eq:pdf-alkdm}
	f_A(x) = g(x) \left[ (1+\lambda) - 4\lambda G(x) + 3\lambda G^2(x) \right].
\end{equation}
Setting $\lambda=0$ enables to get the baseline distribution. It is important to note that this form of cubic transmutation is not a generalization of quadratic transmutation because the term $\lambda G^3(x)$ can only vanish if $\lambda=0$, which would also force the term $2\lambda G^2(x)$ to vanish.

\subsection{Third formula: \citet{rahman-et-al-2018a}}

By permuting the probabilities $\pi_1$ and $\pi_3$ in Equation \eqref{eq:cdf-granz-0} and taking into account the relation $\pi_1 + \pi_2 + \pi_3 = 1$, \citet {rahman-et-al-2018a} obtained another form of cubic transmutation whose cdf is defined by
\begin{equation}
	\label{eq:cdf-granz-0-b}
	F(x) = (3 - 3\pi_1 - 3\pi_2) G(x) + (3\pi_1 + 6\pi_2 - 3) G^2(x) - (3\pi_2 - 1) G^3(x).
\end{equation}
By setting $\lambda_1 = 2 - 3\pi_1 - 3\pi_2$ and $\lambda_2 = 3\pi_2 - 1$, they obtained
\begin{equation}
	\label{eq:cdf-r18a}
	F_{R18a}(x) = (1 + \lambda_1) G(x) + (\lambda_2 - \lambda_1) G^2(x) - \lambda_2 G^3(x),
\end{equation}
where the couple of new parameters $(\lambda_1,\lambda_2)$ belongs to the set
\begin{equation}
	\label{eq:range-r18a}
	\mathscr{S}_{R18a} = \Big\{ (\lambda_1,\lambda_2) :-1 \leqslant \lambda_1 \leqslant 1, \;\; -1 \leqslant \lambda_2 \leqslant 1 \;\; \text{and} \;\; -2 \leqslant \lambda_1 + \lambda_2 \leqslant 1 \Big\}.
\end{equation}
The corresponding pdf is 
\begin{equation}
	\label{eq:pdf-r18a}
	f_{R18a}(x) = g(x) \left[ (1 + \lambda_1) + 2(\lambda_2 - \lambda_1) G(x) - 3\lambda_2 G^2(x) \right].
\end{equation}
One gets the baseline distribution by setting $\lambda_1 = \lambda_2 = 0$ and the quadratic transmutation by setting $\lambda_1 = \lambda$ and $\lambda_2 = 0$.

\subsection{Fourth formula: \citet{rahman-et-al-2018b}}

Considering Equation \eqref{eq:cdf-granz-0-b} and setting $\lambda_1 = 1 - 3\pi_1$ and $\lambda_2= 1 - 3\pi_2$, \citet{rahman-et-al-2018b} obtained a new distribution function in the form 
\begin{equation}
	\label{eq:cdf-r18b}
	F_{R18b}(x) = (1 + \lambda_1 + \lambda_2) G(x) - (\lambda_1 + 2\lambda_2) G^2(x) + \lambda_2 G^3(x),
\end{equation}
and the corresponding pdf is 
\begin{equation}
	\label{eq:pdf-r18b}
	f_{R18b}(x) = g(x) \left[ (1 + \lambda_1 + \lambda_2) - 2(\lambda_1 + 2\lambda_2) G(x) + 3\lambda_2 G^2(x) \right].
\end{equation}
One gets the baseline distribution by setting $\lambda_1 = \lambda_2 = 0$ and the quadratic transmutation by setting $\lambda_1 = \lambda$ and $\lambda_2 = 0$.

The range of the parameters are
\begin{equation}
	\label{eq:range-r18b}
	\mathscr{S}_{R18b} = \Big\{ (\lambda_1,\lambda_2) : -1 \leqslant \lambda_1 \leqslant 1 \; \text{and} \; 0 \leqslant \lambda_2 \leqslant 1 \Big\}.
\end{equation}

\subsection{Fifth formula: \citet{rahman-et-al-2019b}}

Using the formula \eqref{eq:cdf-alzaatreh} for the pdf
$$
r(t) = (1-\lambda) + 6\lambda t - 6 \lambda t^2
$$
on $[0,1]$, \citet{rahman-et-al-2019b} defined another form of cubic transmutation with the following cdf:
\begin{equation}
	\label{eq:cdf-r19}
	F_{R19}(x) = (1 - \lambda) G(x) + 3\lambda G^2(x) - 2\lambda G^3(x),
\end{equation}
where $-1 \leqslant \lambda \leqslant 1$. The corresponding pdf is
\begin{equation}
	\label{eq:pdf-r19}
	f_{R19}(x) = g(x) \left[ (1 - \lambda) + 6\lambda G(x) - 6\lambda G^2(x) \right].
\end{equation}
This cubic transmutation formula provides the baseline distribution for $\lambda = 0$. For the same reasons as those mentioned for the formula of \citet{alkadim-mohammed-2017}, this form of cubic transmutation is not a generalization of quadratic transmutation.

\begin{remark}
	The transmutation formula \eqref{eq:cdf-r19} has also been considered by \citet{aslam-et-al-2020} and called modified transform (MT) distribution. The latter studied the statistical characteristics of MT distributions and applied it using Weibull distribution as baseline distribution. Application on two real datasets suggest that their proposed MT Weibull distribution fits the data better than the CT Weibull using the formula of \citet{granzotto-et-al-2017}. 
\end{remark}

\subsection{Sixth formula: \citet{rahman-et-al-2023}}

Using the formula \eqref{eq:cdf-alzaatreh} for the pdf 
$$
r(t) = 1 - \lambda (\eta - 1) + 2\lambda (2\eta - 1)t - 3\lambda \eta t^2
$$
on $[0,1]$, \citet{rahman-et-al-2023} have developed a modified cubic transmuted family of distributions. The cdf of this family of distributions is defined for all $x \in \reels$ by
\begin{equation}
	\label{eq:cdf-r23} 
	F_{R23}(x) = \left[ 1 - \lambda (\eta - 1) \right] G(x) + \lambda (2\eta - 1) G^2(x) - \lambda \eta G^3(x),
\end{equation}
where 
\begin{equation} 
	\label{eq:range-r23}
	(\lambda,\eta) \in \mathscr{S}_{R23} = \Big\{ (\lambda, \eta) : -1 \leqslant \lambda \leqslant 1 \;\; \text{and} \;\; 0 \leqslant \eta \leqslant 2 \Big\}
\end{equation}
and the corresponding pdf is
\begin{equation} 
	\label{eq:pdf-r23}
	f_{R23}(x) = g(x) \left[ 1 - \lambda (\eta - 1) + 2\lambda (2\eta - 1) G(x) - 3 \lambda \eta G^2(x) \right].
\end{equation}
When $\lambda = 0$, one gets the baseline distribution. Setting $\eta = 0$ enables to get the quadratic transmutation of \citet{shaw-buckley-2009} and setting $\eta = 2$ enables to get the cubic transmutation of \citet{rahman-et-al-2019b}.

\section{Main theoretical comparison results}
\label{sec:results}

For the six cubic transmutation models presented in Section \ref{sec:review-CT}, the constraints on the new parameters are very important to ensure that the new cdf verify all the properties required for a cumulative distribution function. For each model, the parameter range (or the set of all possible values of the parameters) are as important as the definition set would be for a function. In this sense, the comparison of two transmutation formulas must be done not only in terms of expression of cdf but also in terms of range of parameters. In this section, we give some theoretical comparison results on these different formulas. In the straight line of Definition \ref{def:CT}, we carry out a critical study of different formulas, in particular by re-examining the parameter ranges in such a way as to always guarantee that the new distribution is well defined (i.e. is truly a probability distribution). As suggested by some comments and calculations in \citep{hameldarbandi-yilmaz-2020,yilmaz-2025}, the theoretical study of the range of parameters is far from easy when the cubic transmutation contains two additional parameters and it is possible to then resort to a graphical study. The study made in this section will lead us to propose modified versions of the parameter range allowing us to extend the distributions obtained. We will also study the formulas to determine if some of the studied families of distributions can be obtained as subfamilies of the others. 

Our main theoretical comparison results are given by Proposition \ref{prop:granz} to Proposition \ref{prop:ra19} hereafter.

\begin{proposition}
	\label{prop:granz}
	Let $G(x)$ be a cdf of a given random variable. The Granzotto Cubic Transmuted distribution of parameters $\lambda_1$ and $\lambda_2$ (which we will denote $CT_G(\lambda_1,\lambda_2)$) with cdf $F_G(x)$ (see Equation \eqref{eq:cdf-granz}) is not well defined under conditions \eqref{eq:range-granz}. The $CT_G(\lambda_1,\lambda_2)$ distribution is rather well defined under the following modified conditions on $\lambda_1$ and $\lambda_2$:
	\begin{equation} 
		\label{eq:range-granz-b} 
		\mathscr{S}_{MG} = \Big\{ (\lambda_1,\lambda_2) : 0 \leqslant \lambda_1 \leqslant 3, \; 0 \leqslant \lambda_2 \leqslant 3 \; \text{and} \; 0 \leqslant \lambda_1 + \lambda_2 \leqslant 3 \Big\},
	\end{equation}
	and will be called modified Granzotto Cubic Transmutation and denoted $CT_{MG}$.
\end{proposition}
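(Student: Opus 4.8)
The plan is to reduce the claim to a one-variable monotonicity statement, dispose of the range $\mathscr{S}_G$ with a single counterexample, and establish $\mathscr{S}_{MG}$ via the mixture-of-order-statistics representation. Write $F_G(x)=R(G(x))$ where $R(t)=\lambda_1 t+(\lambda_2-\lambda_1)t^2+(1-\lambda_2)t^3$ is a cubic polynomial with $R(0)=0$ and $R(1)=1$ for all $(\lambda_1,\lambda_2)$. It is standard that $R\circ G$ is a genuine cdf for every continuous baseline cdf $G$ if and only if $R$ is nondecreasing on $[0,1]$, which, $R$ being a polynomial, amounts to $R'(t)\geqslant 0$ for all $t\in[0,1]$. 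Here $R'(t)=\phi(t):=3(1-\lambda_2)t^2+2(\lambda_2-\lambda_1)t+\lambda_1$ is exactly the bracketed factor in the pdf \eqref{eq:pdf-granz} (with $t$ in place of $G(x)$), so everything reduces to describing the set of $(\lambda_1,\lambda_2)$ for which the quadratic $\phi$ is nonnegative on $[0,1]$.

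That $\mathscr{S}_G$ is inadequate follows from a single point: for $(\lambda_1,\lambda_2)=(0,-1)\in\mathscr{S}_G$ one gets $\phi(t)=6t^2-2t=2t(3t-1)$, strictly negative on $(0,\tfrac{1}{3})$; equivalently $F_G(x)=G^2(x)\bigl(2G(x)-1\bigr)<0$ whenever $0<G(x)<\tfrac{1}{2}$, which happens for every non-degenerate baseline, in particular the Pareto one. Thus $F_G$ is neither monotone nor $[0,1]$-valued, so $CT_G$ is not well defined throughout $\mathscr{S}_G$.

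For the positive assertion I would argue probabilistically. Given $(\lambda_1,\lambda_2)\in\mathscr{S}_{MG}$, set $\pi_1=\lambda_1/3$, $\pi_2=\lambda_2/3$ and $\pi_3=1-\pi_1-\pi_2$: the inequalities $0\leqslant\lambda_1$, $0\leqslant\lambda_2$ and $\lambda_1+\lambda_2\leqslant 3$ defining $\mathscr{S}_{MG}$ are precisely $\pi_1\geqslant 0$, $\pi_2\geqslant 0$, $\pi_3\geqslant 0$, while $\pi_1+\pi_2+\pi_3=1$ automatically. Hence $(\pi_1,\pi_2,\pi_3)$ is a probability vector, and by the derivation recalled just before \eqref{eq:cdf-granz} we have $F_G(x)=\pi_1\proba(X_{1:3}\leqslant x)+\pi_2\proba(X_{2:3}\leqslant x)+\pi_3\proba(X_{3:3}\leqslant x)$, a convex combination of cdfs (those of order statistics), hence itself a cdf; the bounds $\lambda_1\leqslant 3$ and $\lambda_2\leqslant 3$ in \eqref{eq:range-granz-b} are then redundant.

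A self-contained alternative is to verify $\phi\geqslant 0$ on $[0,1]$ directly, by a case split on the sign of the leading coefficient $3(1-\lambda_2)$ and, when $\lambda_2<1$, on whether the vertex $t^{\star}=(\lambda_1-\lambda_2)/\bigl(3(1-\lambda_2)\bigr)$ lies in $(0,1)$; in every case except the interior-vertex one the minimum of $\phi$ over $[0,1]$ is an endpoint value, $\phi(0)=\lambda_1\geqslant 0$ or $\phi(1)=3-\lambda_1-\lambda_2\geqslant 0$. The one step with real content is the interior-vertex subcase $\lambda_2<\lambda_1<3-2\lambda_2$: there $\min_{[0,1]}\phi=\phi(t^{\star})=\lambda_1-(\lambda_1-\lambda_2)^2/\bigl(3(1-\lambda_2)\bigr)$, and clearing the positive denominator turns $\phi(t^{\star})\geqslant 0$ into $\lambda_1^2+\lambda_1\lambda_2+\lambda_2^2\leqslant 3\lambda_1$; as a quadratic in $\lambda_1$ this has discriminant $3(\lambda_2+3)(1-\lambda_2)\geqslant 0$, and $\lambda_1,\lambda_2\geqslant 0$ together with $\lambda_2<\lambda_1<3-2\lambda_2$ place $\lambda_1$ between the two roots, both inclusions reducing to the elementary inequality $3(1-\lambda_2)\leqslant\sqrt{3(\lambda_2+3)(1-\lambda_2)}$, i.e.\ $\lambda_2\geqslant 0$. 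This subcase is exactly what the mixture viewpoint lets one bypass, so I would present the probabilistic argument as the main one; I would also remark that $\mathscr{S}_{MG}$ is sufficient but not the largest admissible range (for instance $(\lambda_1,\lambda_2)=(\tfrac{1}{2},-\tfrac{1}{2})$ makes $\phi$ have negative discriminant, so $\phi>0$ on $\reels$ and $F_G$ is still a cdf), yet it is the natural choice, being exactly the set on which the order-statistics mixture representation holds.
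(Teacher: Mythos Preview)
Your proof is correct and follows essentially the same line as the paper's: a specific counterexample in $\mathscr{S}_G$ (you use $(0,-1)$, the paper uses $(0,-0.5)$) together with the mixture-of-order-statistics argument for $\mathscr{S}_{MG}$. Your additional direct verification of $\phi\geqslant 0$ via case analysis, the observation that the upper bounds $\lambda_1\leqslant 3$ and $\lambda_2\leqslant 3$ are redundant, and the explicit example showing $\mathscr{S}_{MG}$ is not maximal are all correct and go somewhat beyond the paper's proof (the paper relegates the non-maximality to a graphical remark rather than an explicit point).
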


\begin{proof}
	By definition of the cdf of order statistics (see for example, \citet{david-mishriky-1968}), the cdf 
	$$
	F_G(x) = 3\pi_1 G(x) + 3(\pi_2 - \pi_1) G^2(x) + (1 - 3\pi_2) G^3(x)
	$$
	from Equation \eqref{eq:cdf-granz-0} is well defined under the conditions $0 \leqslant \pi_i \leqslant 1$ for all $i=1,2,3$ and $\pi_1 + \pi_2 = 1-\pi_3 \in [0,1]$. Taking into account the relationships $\lambda_1 = 3\pi_1$ and $\lambda_2 = 3\pi_2$, the parameters $\lambda_1$ and $\lambda_2$ therefore satisfy the three constraints \eqref{eq:range-granz-b}. Since the conditions \eqref{eq:range-granz} defined by \citet{granzotto-et-al-2017} are not included in the conditions \eqref{eq:range-granz-b}, it is enough to find a counter-example to show that $CT_G$ is not well defined under conditions \eqref{eq:range-granz}. Let us set $\lambda_1 = 0$ and $\lambda_2 = -0.5$. The couple $(\lambda_1, \lambda_2) = (0,-0.5)$ satisfies conditions \eqref{eq:range-granz} and the corresponding cdf is 
	\begin{equation*}
		F_G(x) = -0.5 \, G^2(x) + 1.5 \, G^3(x) = 1.5 \, G^2(x) \left[ G(x) - \frac{1}{3} \right].
	\end{equation*}
	Let $q_{1/3}$ be the quantile of order $1/3$ of the baseline distribution associated with the cdf $G(x)$. We have
	$$
	\begin{cases}
		1.5 \, G^2(x) \left[ G(x) - \frac{1}{3} \right] \leqslant 0 & \text{if} \quad x \leqslant q_{1/3}, \\[0.2cm]
		1.5 \, G^2(x) \left[ G(x) - \frac{1}{3} \right] \geqslant 0 & \text{if} \quad x \geqslant q_{1/3},
	\end{cases}
	$$
	which proves that $F_G(x)$ is not a cumulative distribution function for $\lambda_1 = 0$ et $\lambda_2 = -0.5$.
\end{proof}

\begin{remark}
	By relaxing the assumption of mixture density of order statistics which requires that $\lambda_1 \geqslant 0$ and $\lambda_2 \geqslant 0$ and by using the method based on Formula \eqref{eq:cdf-alzaatreh}, the fact that $f_G(x)$ is a well defined pdf is equivalent to the fact that the function 
	\begin{equation} 
		\label{eq:r-granz}
		r(t) = \lambda_1 + 2(\lambda_2 - \lambda_1) t + 3(1 - \lambda_2) t^2
	\end{equation}
	is positive and thus a pdf on $[0,1]$. Figure \ref{fig:verif-granz} presents a graphical study (for $\lambda_1 \in [-0.5,4.5]$ and $\lambda_2 \in [-3.5,3.5]$ with discretization step equal to 0.1 for both parameters) of whether or not, $r(t)$ defined by Equation \eqref{eq:r-granz} is a positive function on $[0,1]$. In this figure, the range \eqref{eq:range-granz} proposed by \citet{granzotto-et-al-2017} is represented by a square area delimited in red dotted lines while the modified range \eqref{eq:range-granz-b} proposed in this paper is represented by the black triangular area. We notice that there exist values of $\lambda_1$ and $\lambda_2$ outside the conditions \eqref{eq:range-granz-b} for which $r(t)$ is positive but the shape of the graph does not suggest any simple expression of the constraints on these parameters apart from the conditions \eqref{eq:range-granz-b}.
	
	\begin{figure}[!h]
		\centering
		\includegraphics[scale=0.5]{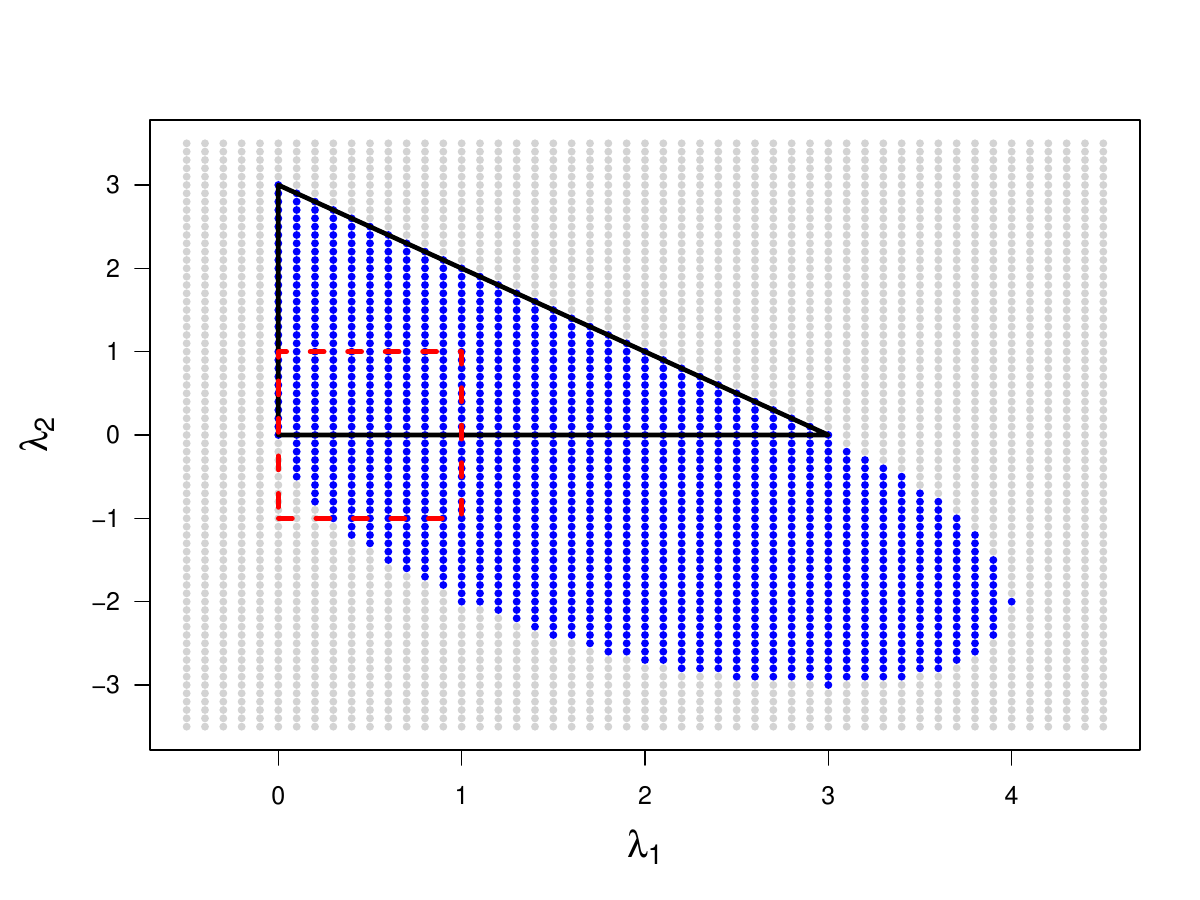}
		\caption{Graphical study for different values of $\lambda_1$ and $\lambda_2$ of whether or not $r(t) = \lambda_1 + 2(\lambda_2 - \lambda_1) t + 3(1 - \lambda_2) t^2$ is positive on $[0,1]$ (blue colour means "yes" and gray means "no").}
		\label{fig:verif-granz}
	\end{figure}
\end{remark}

\begin{remark}
	The fact that the Granzotto Cubic Transmuted distribution is not well defined has also been pointed out by \citet{hameldarbandi-yilmaz-2020}. However, the proof and the modified range given in our paper are different from the ones given by these latter authors. Indeed, the approach used by \citet{hameldarbandi-yilmaz-2020} is based on a reparametrization leading to the cubic transmutation of \citet{rahman-et-al-2018a} while the approach used in the present paper is more direct and more simple.
\end{remark}

\begin{remark}
	Since the Granzotto cubic transmuted distribution is not well defined, it does not make sense to compare other CT distributions with it but rather with its modified version.
\end{remark}

\begin{proposition}
	\label{prop:alk}
	Let $G(x)$ be the cdf of a random variable. 
	\begin{enumerate}[i)]
		\item For all $\lambda$ such that $-1 \leqslant \lambda \leqslant 1$, the $CT_A(\lambda)$ family linked to $G(x)$ is a sub-family of the $CT_{R18a}(\lambda_1,\lambda_2)$ family of distributions linked to $G(x)$.
		\item The $CT_A(\lambda)$ family of distributions defined by Equation \eqref{eq:cdf-alkdm} is not a sub-family of the $CT_G(\lambda_1,\lambda_2)$ family of distributions if $\lambda \neq 0$.
		\item The $CT_A(\lambda)$ family of distributions is rather a sub-family of the $CT_{MG}(\lambda_1,\lambda_2)$ family of distributions for $-1 \leqslant \lambda \leqslant 1$. 
		\item Moreover, the $CT_A(\lambda)$ distribution linked to $G(x)$ is well defined under the extended condition $-1 \leqslant \lambda \leqslant 3$ and will be called modified AL-Kadim Cubic Transmutation and denoted $CT_{MA}(\lambda)$.
	\end{enumerate}
\end{proposition}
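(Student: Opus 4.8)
The plan is to dispatch (i)–(iii) by matching coefficients of the powers of $G(x)$, and to settle (iv) by the positivity criterion for the weight polynomial recorded in the remark following Proposition~\ref{prop:granz}. The cdf \eqref{eq:cdf-alkdm} of $CT_A(\lambda)$ and the cdfs \eqref{eq:cdf-r18a}, \eqref{eq:cdf-granz} are all cubic polynomials in $G(x)$ with zero constant term, so two such families coincide at given parameter values if and only if their three nonzero coefficients agree. For (i), equating \eqref{eq:cdf-alkdm} with \eqref{eq:cdf-r18a} in powers of $G(x)$: the $G(x)$-coefficient gives $\lambda_1 = \lambda$, the $G^3(x)$-coefficient gives $\lambda_2 = -\lambda$, and the $G^2(x)$-coefficient is then automatically consistent ($-2\lambda = \lambda_2 - \lambda_1$). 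It remains only to check $(\lambda,-\lambda) \in \mathscr{S}_{R18a}$: the first two bounds of \eqref{eq:range-r18a} follow from $-1 \le \lambda \le 1$, and $\lambda_1 + \lambda_2 = 0 \in [-2,1]$; hence $CT_A(\lambda) = CT_{R18a}(\lambda,-\lambda)$ for every $\lambda \in [-1,1]$.

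For (ii) and (iii) I would run the same matching between \eqref{eq:cdf-alkdm} and \eqref{eq:cdf-granz}: the coefficients of $G(x)$ and $G^3(x)$ force the \emph{unique} candidate $\lambda_1 = 1+\lambda$, $\lambda_2 = 1-\lambda$ (the $G^2(x)$-coefficient again being consistent automatically), so this is the only parametrization that could reproduce a $CT_A(\lambda)$ distribution. For (ii): $(1+\lambda,1-\lambda) \in \mathscr{S}_G$ would require both $0 \le 1+\lambda \le 1$ and $-1 \le 1-\lambda \le 1$, i.e. $\lambda \in [-1,0] \cap [0,2] = \{0\}$; thus the candidate leaves $\mathscr{S}_G$ whenever $\lambda \neq 0$, and since the matching is unique, $CT_A(\lambda)$ is then not a member of the $CT_G$ family. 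For (iii): testing the same candidate against \eqref{eq:range-granz-b}, one has $0 \le 1+\lambda \le 3$ and $0 \le 1-\lambda \le 3$ for all $\lambda \in [-1,1]$, while $\lambda_1 + \lambda_2 = 2 \in [0,3]$; hence $(1+\lambda,1-\lambda) \in \mathscr{S}_{MG}$ and $CT_A(\lambda) = CT_{MG}(1+\lambda,1-\lambda)$ for $-1 \le \lambda \le 1$, so $CT_A$ is a sub-family of $CT_{MG}$.

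For (iv), by the criterion noted after Proposition~\ref{prop:granz} the distribution with cdf \eqref{eq:cdf-alkdm} is well defined exactly when the quadratic $r(t) = (1+\lambda) - 4\lambda t + 3\lambda t^2$ from \eqref{eq:pdf-alkdm} is nonnegative on $[0,1]$ (one checks directly that $R(0)=0$, $R(1)=1$). For $\lambda = 0$ this is immediate. For $\lambda > 0$, $r$ is convex with vertex at $t = 2/3 \in [0,1]$, where it attains its minimum $r(2/3) = 1 - \lambda/3$, so nonnegativity on $[0,1]$ amounts to $\lambda \le 3$. For $\lambda < 0$, $r$ is concave, so its minimum over $[0,1]$ sits at an endpoint; since $r(1) = 1$ and $r(0) = 1+\lambda$, nonnegativity amounts to $\lambda \ge -1$. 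Combining the three cases gives precisely $-1 \le \lambda \le 3$, the claimed extended range for $CT_{MA}(\lambda)$.

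The genuine work should be confined to the case analysis in (iv) — in particular noticing that the vertex $t = 2/3$ of $r$ lies inside $[0,1]$, so that for $\lambda > 0$ the relevant minimum is $1 - \lambda/3$ rather than an endpoint value; everything else reduces to comparing linear forms in $\lambda$ against a handful of box and sum constraints. The one conceptual point to state carefully is in (ii): the negative conclusion relies on the coefficient matching being unique, so that ruling out the single candidate $(1+\lambda,1-\lambda)$ rules out the entire $CT_G$ family.
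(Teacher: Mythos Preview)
Your proof is correct and follows essentially the same approach as the paper: the same coefficient matchings $(\lambda_1,\lambda_2)=(\lambda,-\lambda)$ for (i) and $(\lambda_1,\lambda_2)=(1+\lambda,1-\lambda)$ for (ii)--(iii), and the same analysis of the minimum of $r(t)=(1+\lambda)-4\lambda t+3\lambda t^2$ on $[0,1]$ via the vertex $t=2/3$ for (iv). Your treatment of (ii) is in fact slightly more careful than the paper's, since you make explicit that the coefficient matching is \emph{unique}, which is what actually justifies the negative conclusion.
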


\begin{proof} 
	\mbox{}
	\begin{enumerate}[i)]
		\item Let $\lambda \in [-1,1]$. Equation \eqref{eq:pdf-alkdm} can be rewritten under the form of Equation \eqref{eq:pdf-r18a} by setting $\lambda_1 = \lambda$ and $\lambda_2 = -\lambda$. We then have $\lambda_1 = \lambda \in [-1,1]$, $\lambda_2 = -\lambda \in [-1,1]$ and $\lambda_1 + \lambda_2 = \lambda - \lambda = 0 \in [-2,1]$ so we can conclude that the $CT_A(\lambda)$ family is a sub-family of the $CT_{R18a}(\lambda_1,\lambda_2)$ family of distributions.
		
		\item Let us first prove by absurd that the $CT_A(\lambda)$ family of distributions is not a sub-family of the $CT_G(\lambda_1,\lambda_2)$ family if $\lambda \neq 0$. Certainly, as noted by \citet{saha-et-al-2024}, Equation \eqref{eq:pdf-alkdm} can be rewritten under the form of Equation \eqref{eq:pdf-granz} by setting $\lambda_1 = 1+\lambda$ and $\lambda_2 = 1-\lambda$. However, there is a small subtlety that makes all the difference. It is that under such a change of variables, we would have 
		\begin{align*} 
			\left. \begin{array}{c} 
				0 \leqslant \lambda_1 \leqslant 1 \\ 
				-1 \leqslant \lambda_2 \leqslant 1 
			\end{array} \right\} & \iff \left\{ \begin{array}{c} 
				0 \leqslant 1+\lambda \leqslant 1 \\ 
				-1 \leqslant 1-\lambda \leqslant 1 
			\end{array} \right. \\
			& \iff \left\{ \begin{array}{c} 
				-1 \leqslant \lambda \leqslant 0 \\
				0 \leqslant \lambda \leqslant 2 
			\end{array} \right. \\
			& \iff \lambda=0.
		\end{align*}
		Thus, the $CT_A(\lambda)$ family of distributions is a sub-family of the $CT_G(\lambda_1,\lambda_2)$ family only if $\lambda = 0$.
		
		\item Let us now prove that the $CT_A(\lambda)$ family of distributions is a sub-family of the $CT_{MG}$ family of distributions for all $\lambda \in [-1,1]$. Let $\lambda_1 = 1+\lambda$ and $\lambda_2 = 1-\lambda$. 
		\begin{align*} 
			\left. \begin{array}{c} 
				0 \leqslant \lambda_1 \leqslant 3 \\ 
				0 \leqslant \lambda_2 \leqslant 3 
			\end{array} \right\} & \iff \left\{ \begin{array}{c} 
				0 \leqslant 1+\lambda \leqslant 3 \\ 
				0 \leqslant 1-\lambda \leqslant 3 
			\end{array} \right. \\
			& \iff \left\{ \begin{array}{c} 
				-1 \leqslant \lambda \leqslant 2 \\
				-2 \leqslant \lambda \leqslant 1 
			\end{array} \right. \\
			& \iff -1 \leqslant \lambda \leqslant 1.
		\end{align*}
		
		\item Let us now prove that the $CT_A(\lambda)$ distribution is well defined under the extended condition $-1 \leqslant \lambda \leqslant 3$. The pdf \eqref{eq:pdf-alkdm} is given by 
		$$
		f_A(x) = g(x) \, r(G(x)),
		$$
		where for all $t \in [0,1]$, $r(t) = (1+\lambda) - 4\lambda t + 3\lambda t^2$ and, according to \cite{alzaatreh-et-al-2013}, $f_A(x)$ is well defined if $r(t)$ is a pdf on $[0,1]$. We have
		$$
		\int_{0}^{1} r(t) \dt = 1.
		$$
		So for $r(t)$ to be a pdf, it is enough to check that for all $t \in [0,1]$, $r(t) \geqslant 0$. We have $r'(t) = 2\lambda (3t-2)$. 
		\begin{list}{$\circ$}{}
			\item If $\lambda \leqslant 0$, then the variation table of $r(t)$ is as follows: 
			$$
			\begin{tabvar}{|C|LCCCC|} 
				\hline t & 0 & & \frac{2}{3} & & 1 \\ 
				\hline r'(t) & & + & \barre{0} & - & \\ 
				\hline \niveau{1}{2} \TVcenter{r(t)} & 1+\lambda & \croit & \TVstretch[3pt]{\frac{3-\lambda}{3}} & \decroit & 1 \\ 
				\hline		
			\end{tabvar}
			$$
			It is enough that $\lambda \geqslant -1$.
			
			\item If $\lambda \geqslant 0$, then the variation table of $r(t)$ is as follows:
			$$
			\begin{tabvar}{|C|LCCCC|} 
				\hline t & 0 & & \frac{2}{3} & & 1 \\ 
				\hline r'(t) & & - & \barre{0} & + & \\ 
				\hline \niveau{2}{2} \TVcenter{r(t)} & \TVstretch[3pt]{1+\lambda} & \decroit & \TVstretch[3pt]{\frac{3-\lambda}{3}} & \croit & 1 \\ 
				\hline		
			\end{tabvar}
			$$
			It is enough that $\lambda \leqslant 3$.
		\end{list}
		Thus, $r(t)$ is positive on $[0,1]$ if $-1 \leqslant \lambda \leqslant 0$ or $0 \leqslant \lambda \leqslant 3$. We conclude that $r(t)$ is positive on $[0,1]$ if $-1 \leqslant \lambda \leqslant 3$.
	\end{enumerate}	
\end{proof}

\begin{proposition}
	\label{prop:ra18a}
	Let $G(x)$ be the cdf of a random variable. 
	\begin{enumerate}[i)]
		\item The cubic transmuted family of distributions of parameters $\lambda_1$ and $\lambda_2$ proposed by \citet{rahman-et-al-2018a} (denoted $CT_{R18a}(\lambda_1,\lambda_2)$) linked to $G(x)$ is well defined under conditions \eqref{eq:range-r18a}. 
		\item The $CT_{R18a}(\lambda_1,\lambda_2)$ family is also well defined under the extended range
		\begin{equation} 
			\label{eq:range-r18a-b}
			\mathscr{S}_{MR18a} = \Big\{ (\lambda_1,\lambda_2) : -1 \leqslant \lambda_1 \leqslant 2, \; -1 \leqslant \lambda_2 \leqslant 2 \; \text{and} \; -2 \leqslant \lambda_1 + \lambda_2 \leqslant 1 \Big\}
		\end{equation}
		and will be called modified $CT_{R18a}$ and denoted $CT_{MR18a}$.
		\item For all $(\lambda_1,\lambda_2) \in \mathscr{S}_{MR18a}$, the $CT_{MR18a}(\lambda_1,\lambda_2)$ family of distributions linked to $G(x)$ is equivalent to the $CT_{MG}(\lambda_1 + 1, \lambda_2 + 1)$ family of distributions linked to $G(x)$.
		\item For all $(\lambda_1,\lambda_2) \in \mathscr{S}_{MG}$, the $CT_{MG}(\lambda_1,\lambda_2)$ family of distributions linked to $G(x)$ is equivalent to the $CT_{MR18a}(\lambda_1 - 1, \lambda_2 - 1)$ family of distributions linked to $G(x)$.
	\end{enumerate}
\end{proposition}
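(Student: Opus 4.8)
The plan is to prove parts iii) and iv) first, since the reparametrization they encode is the algebraic heart of the proposition, and then to deduce i) and ii) from it together with Proposition~\ref{prop:granz}.

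First I would establish the identity linking the two cdfs. Substituting $\mu_1 = \lambda_1 + 1$ and $\mu_2 = \lambda_2 + 1$ into the Granzotto cdf \eqref{eq:cdf-granz} gives
\begin{equation*}
\mu_1 G(x) + (\mu_2 - \mu_1) G^2(x) + (1 - \mu_2) G^3(x) = (1 + \lambda_1) G(x) + (\lambda_2 - \lambda_1) G^2(x) - \lambda_2 G^3(x),
\end{equation*}
which is exactly $F_{R18a}(x)$ from \eqref{eq:cdf-r18a} (equivalently one may match \eqref{eq:pdf-granz} against \eqref{eq:pdf-r18a}). Thus, as functions of $x$, the cdf of $CT_{R18a}(\lambda_1,\lambda_2)$ and the cdf of $CT_G(\lambda_1+1,\lambda_2+1)$ coincide. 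Next I would check that the affine shift $\Phi : (\lambda_1,\lambda_2) \mapsto (\lambda_1+1,\lambda_2+1)$ maps $\mathscr{S}_{MR18a}$ bijectively onto $\mathscr{S}_{MG}$: this is the one-line verification $-1 \leqslant \lambda_1 \leqslant 2 \iff 0 \leqslant \lambda_1 + 1 \leqslant 3$, the analogous statement for $\lambda_2$, and $-2 \leqslant \lambda_1 + \lambda_2 \leqslant 1 \iff 0 \leqslant (\lambda_1+1) + (\lambda_2+1) \leqslant 3$, the inverse being the shift by $-1$. Combining the cdf identity with this bijection yields iii) and iv) simultaneously.

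For part ii) I would then appeal to Proposition~\ref{prop:granz}, whose proof shows that on $\mathscr{S}_{MG}$ the function $F_G$ is a genuine cumulative distribution function (it is the cdf of the mixture of order statistics with weights $\pi_1 = \lambda_1/3$, $\pi_2 = \lambda_2/3$, $\pi_3 = 1 - (\lambda_1+\lambda_2)/3$, all lying in $[0,1]$). Hence for any $(\lambda_1,\lambda_2) \in \mathscr{S}_{MR18a}$ the function $F_{R18a}$ equals $F_G$ at the parameter $(\lambda_1+1,\lambda_2+1) \in \mathscr{S}_{MG}$, so it too is a valid cdf; this is ii). Part i) is immediate from the inclusion $\mathscr{S}_{R18a} \subseteq \mathscr{S}_{MR18a}$, the only difference between the two regions being that the upper bounds on $\lambda_1$ and $\lambda_2$ are relaxed from $1$ to $2$ while the constraint on $\lambda_1 + \lambda_2$ is unchanged.

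There is no genuinely difficult step: everything is coefficient matching and manipulation of linear inequalities. The only points that need care are, first, making sure the argument of Proposition~\ref{prop:granz} really delivers the \emph{sufficiency} of $\mathscr{S}_{MG}$ (it does, via the mixture representation) so that well-definedness can be transported through $\Phi$; and second, checking that the interval endpoints defining $\mathscr{S}_{MR18a}$ and $\mathscr{S}_{MG}$ correspond exactly under $\Phi$ with no slippage. As an alternative to routing through Proposition~\ref{prop:granz}, parts i) and ii) could each be proved directly by a sign analysis of the quadratic $r(t) = (1+\lambda_1) + 2(\lambda_2-\lambda_1) t - 3\lambda_2 t^2$ on $[0,1]$ using a variation table, in the spirit of the proof of Proposition~\ref{prop:alk}; but the reparametrization route is shorter and also explains why the extended range $\mathscr{S}_{MR18a}$ is the natural one.
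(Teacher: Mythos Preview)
Your proposal is correct and follows essentially the same approach as the paper: the same affine shift $(\lambda_1,\lambda_2)\mapsto(\lambda_1+1,\lambda_2+1)$ and the same inequality check establish iii) and iv), and i) follows from the inclusion $\mathscr{S}_{R18a}\subseteq\mathscr{S}_{MR18a}$. The only structural difference is that for ii) the paper argues directly from the order-statistics parametrization $\lambda_1=3\pi_3-1$, $\lambda_2=3\pi_2-1$ of $CT_{R18a}$, whereas you route through iii) and invoke Proposition~\ref{prop:granz}; since both ultimately rest on the same mixture-of-order-statistics representation, this is a reordering rather than a different argument.
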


\begin{proof}
	We start the proof with item ii).
	\begin{enumerate}[i)]
		\item[ii)] By definition of the cdf of order statistics, the cdf 
		$$
		F(x) = (3 - 3\pi_1 - 3\pi_2) G(x) + (3\pi_1 + 6\pi_2 - 3) G^2(x) - (3\pi_2 - 1) G^3(x)
		$$
		from Equation \eqref{eq:cdf-granz-0} is well defined under the conditions $0 \leqslant \pi_i \leqslant 1$ for all $i=1,2,3$ and $\pi_1 + \pi_2 = 1-\pi_3 \in [0,1]$. Since $\lambda_1 = 2 - 3\pi_1 - 3\pi_2$ and $\lambda_2 = 3\pi_2 - 1$, we can write $\lambda_1 = 2 - 3(1-\pi_3) = 3\pi_3 - 1$, $\lambda_2 = 3\pi_2 - 1$ and $\lambda_1 + \lambda_2 = 1-3\pi_1$. Taking into account the range of $\pi_1$, $\pi_2$ and $\pi_3$, we deduce the three constraints \eqref{eq:range-r18a-b}. 
		
		\item[i)] It is immediate from ii) since the conditions \eqref{eq:range-r18a} set by \citet{rahman-et-al-2018a} on $\lambda_1$ and $\lambda_2$ are included in conditions \eqref{eq:range-r18a-b}.
		
		\item[iii)] Let $(\lambda_1,\lambda_2) \in \mathscr{S}_{MR18a}$. It is easily seen that by replacing $\lambda_1$ by $\lambda_1' = 1+\lambda_1$ and $\lambda_2$ by $\lambda_2' = 1+\lambda_2$ in \eqref{eq:cdf-granz}, we obtain \eqref{eq:cdf-r18a}. We also easily verify that $(\lambda_1,\lambda_2) \in \mathscr{S}_{MR18a}$ if and only if $(1+\lambda_1,1+\lambda_2) \in \mathscr{S}_{MG}$. Indeed, 
		\begin{align*}
			(\lambda_1,\lambda_2) \in \mathscr{S}_{MR18a} & \iff \left\{ \begin{array}{c}
				-1 \leqslant \lambda_1 \leqslant 2 \\ -1 \leqslant \lambda_2 \leqslant 2 \\ -2 \leqslant \lambda_1 + \lambda_2 \leqslant 1 
			\end{array} \right. \\
			& \iff \left\{ \begin{array}{c}
				0 \leqslant \lambda_1 + 1 \leqslant 3 \\ 0 \leqslant \lambda_2 + 1 \leqslant 3 \\ 0 \leqslant \lambda_1 + \lambda_2 + 2 \leqslant 3
			\end{array} \right. \\
			& \iff (\lambda_1 + 1, \lambda_2 + 1) \in \mathscr{S}_{MG}
		\end{align*}
		which completes the proof because $(\lambda_1 + 1) + (\lambda_2 + 1) = \lambda_1 + \lambda_2 + 2$.
		
		\item[iv)] Let $(\lambda_1,\lambda_2) \in \mathscr{S}_{MG}$. It is easily seen that by replacing $\lambda_1$ by $\lambda_1' = \lambda_1 - 1$ and $\lambda_2$ by $\lambda_2' = \lambda_2 - 1$ in \eqref{eq:cdf-r18a}, we obtain \eqref{eq:cdf-granz}. We also easily verify that $(\lambda_1,\lambda_2) \in \mathscr{S}_{MG}$ if and only if $(\lambda_1 - 1, \lambda_2 - 1) \in \mathscr{S}_{MR18a}$. Indeed, 
		\begin{align*}
			(\lambda_1,\lambda_2) \in \mathscr{S}_{MG} & \iff \left\{ \begin{array}{c}
				0 \leqslant \lambda_1 \leqslant 3 \\ 0 \leqslant \lambda_2 \leqslant 3 \\ 0 \leqslant \lambda_1 + \lambda_2 \leqslant 3 
			\end{array} \right. \\
			& \iff \left\{ \begin{array}{c}
				-1 \leqslant \lambda_1 - 1 \leqslant 2 \\ -1 \leqslant \lambda_2 - 1 \leqslant 2 \\ -2 \leqslant \lambda_1 + \lambda_2 - 2 \leqslant 1
			\end{array} \right. \\
			& \iff (\lambda_1 - 1, \lambda_2 - 1) \in \mathscr{S}_{MR18a}
		\end{align*}
		which completes the proof because $(\lambda_1 - 1) + (\lambda_2 - 1) = \lambda_1 + \lambda_2 - 2$.
	\end{enumerate}
\end{proof}

\begin{remark}
	By using the method based on Formula \eqref{eq:cdf-alzaatreh}, the fact that $f_{R18a}(x)$ is a well defined pdf is equivalent to the fact that the function 
	\begin{equation} 
		\label{eq:r-r18a}
		r(t) = (1 + \lambda_1) + 2(\lambda_2 - \lambda_1) t - 3\lambda_2 t^2
	\end{equation}
	is positive and thus a pdf on $[0,1]$. Figure \ref{fig:verif-r18a} presents a graphical study (for $\lambda_1 \in [-1.5,3.5]$ and $\lambda_2 \in [-4.5,2.5]$ with discretization step equal to 0.1 for both parameters) of whether or not, $r(t)$ defined by Equation \eqref{eq:r-r18a} is a positive function on $[0,1]$. In this figure, the range \eqref{eq:range-r18a} proposed by \citet{rahman-et-al-2018a} is represented by a red dotted triangular area while the modified range \eqref{eq:range-r18a-b} proposed in this paper is represented by the black triangular area. We also notice that there exist values of $\lambda_1$ and $\lambda_2$ outside the conditions \eqref{eq:range-r18a-b} for which $r(t)$ is positive but the shape of the graph does not suggest any simple expression of the constraints on these parameters apart from the conditions \eqref{eq:range-r18a-b}.
	
	\begin{figure}[!h]
		\centering
		\includegraphics[scale=0.5]{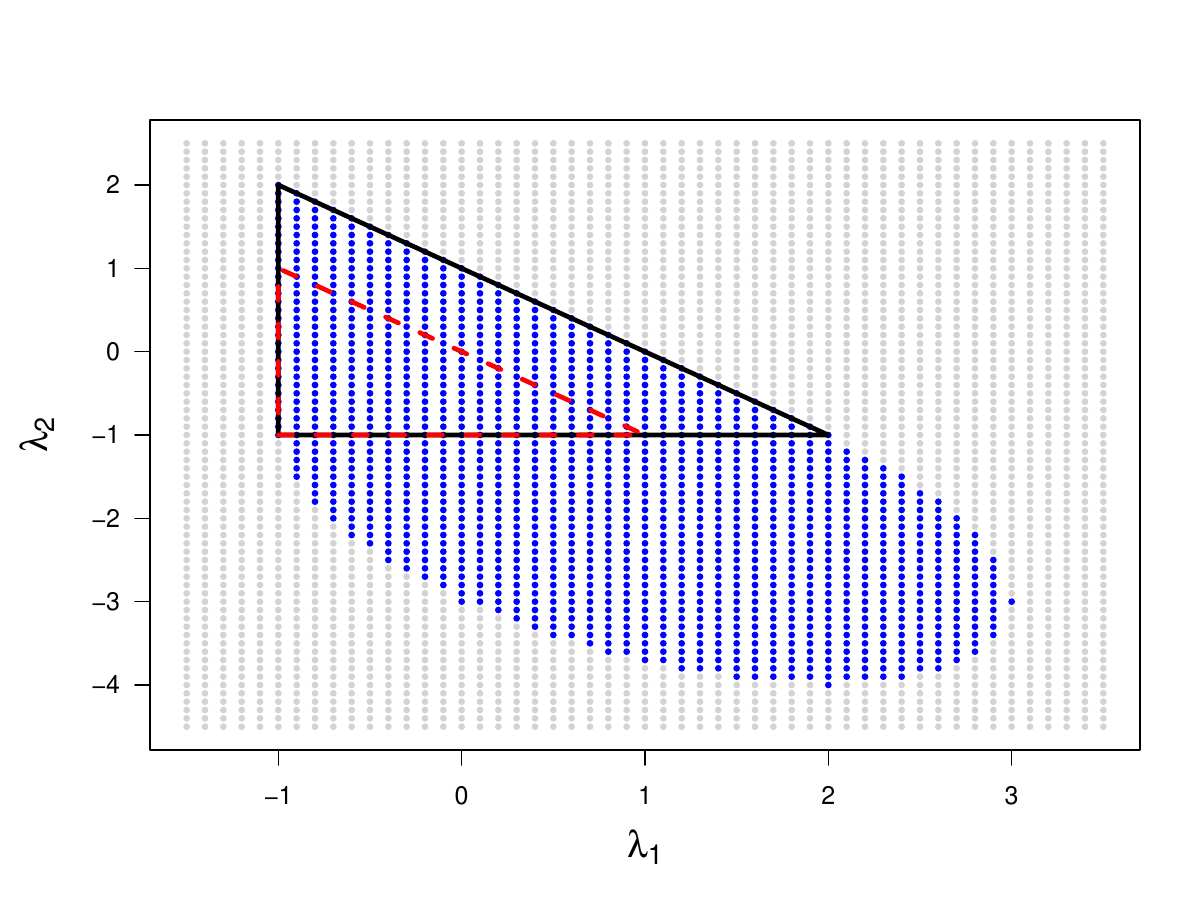}
		\caption{Graphical study for different values of $\lambda_1$ and $\lambda_2$ of whether or not $r(t) = (1 + \lambda_1) + 2(\lambda_2 - \lambda_1) t - 3\lambda_2 t^2$ is positive on $[0,1]$ (blue colour means "yes" and gray means "no").}
		\label{fig:verif-r18a}
	\end{figure}
\end{remark}

\begin{proposition}
	\label{prop:ra18b}
	Let $G(x)$ be the cdf of a random variable. 
	\begin{enumerate}[i)]
		\item The cubic transmuted family of distributions of parameters $\lambda_1$ and $\lambda_2$ proposed by \citet{rahman-et-al-2018b} (denoted $CT_{R18b}(\lambda_1,\lambda_2)$) linked to $G(x)$ is well defined under conditions \eqref{eq:range-r18b}. 
		\item The $CT_{R18b}(\lambda_1,\lambda_2)$ family is also well defined under the extended range
		\begin{equation} 
			\label{eq:range-r18b-b}
			\mathscr{S}_{MR18b} = \Big\{ (\lambda_1,\lambda_2) : -2 \leqslant \lambda_1 \leqslant 1, \; -2 \leqslant \lambda_2 \leqslant 1 \; \text{and} \; -1 \leqslant \lambda_1 + \lambda_2 \leqslant 2 \Big\}
		\end{equation}
		and will be called modified $CT_{R18b}$ and denoted $CT_{MR18b}$.
		\item For all $(\lambda_1,\lambda_2) \in \mathscr{S}_{MR18b}$, the $CT_{MR18b}(\lambda_1,\lambda_2)$ family of distributions linked to $G(x)$ is equivalent to the $CT_{MG}(1 + \lambda_1 + \lambda_2, 1-\lambda_2)$ family of distributions linked to $G(x)$.
		\item For all $(\lambda_1,\lambda_2) \in \mathscr{S}_{MG}$, the $CT_{MG}(\lambda_1,\lambda_2)$ family of distributions linked to $G(x)$ is equivalent to the $CT_{MR18b}(\lambda_1+\lambda_2-2, 1-\lambda_2)$ family of distributions linked to $G(x)$.
	\end{enumerate}
\end{proposition}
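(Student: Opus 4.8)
The plan is to follow the same four-step template as the proof of Proposition~\ref{prop:ra18a}, carrying out item ii) first since it drives everything else. For ii), I would recall that \eqref{eq:cdf-r18b} was obtained from the mixture-of-order-statistics cdf \eqref{eq:cdf-granz-0-b} via the substitution $\lambda_1 = 1 - 3\pi_1$, $\lambda_2 = 1 - 3\pi_2$. Since \eqref{eq:cdf-granz-0-b} is a genuine cdf exactly when $0 \leqslant \pi_i \leqslant 1$ for $i=1,2,3$ and $\pi_1 + \pi_2 = 1 - \pi_3 \in [0,1]$, it suffices to translate these constraints: writing $\lambda_1 + \lambda_2 = 2 - 3\pi_1 - 3\pi_2 = 3\pi_3 - 1$, the ranges $\pi_1,\pi_2,\pi_3 \in [0,1]$ become precisely $-2 \leqslant \lambda_1 \leqslant 1$, $-2 \leqslant \lambda_2 \leqslant 1$ and $-1 \leqslant \lambda_1 + \lambda_2 \leqslant 2$, i.e. $(\lambda_1,\lambda_2) \in \mathscr{S}_{MR18b}$. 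Alternatively one may invoke the characterization via \eqref{eq:pdf-alzaatreh} and check that $r(t) = (1+\lambda_1+\lambda_2) - 2(\lambda_1+2\lambda_2)t + 3\lambda_2 t^2$ is nonnegative on $[0,1]$, but the order-statistics route is shorter.

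Item i) is then immediate: if $-1 \leqslant \lambda_1 \leqslant 1$ and $0 \leqslant \lambda_2 \leqslant 1$ then automatically $-2 \leqslant \lambda_1 \leqslant 1$, $-2 \leqslant \lambda_2 \leqslant 1$ and $\lambda_1 + \lambda_2 \in [-1,2]$, so $\mathscr{S}_{R18b} \subseteq \mathscr{S}_{MR18b}$ and well-definedness on $\mathscr{S}_{R18b}$ follows from ii).

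For iii) I would substitute $\mu_1 = 1 + \lambda_1 + \lambda_2$ and $\mu_2 = 1 - \lambda_2$ into \eqref{eq:cdf-granz} and match the coefficients of $G(x)$, $G^2(x)$, $G^3(x)$; the only computations are $\mu_2 - \mu_1 = -(\lambda_1 + 2\lambda_2)$ and $1 - \mu_2 = \lambda_2$, which return exactly \eqref{eq:cdf-r18b}. It then remains to show $(\lambda_1,\lambda_2) \in \mathscr{S}_{MR18b} \iff (\mu_1,\mu_2) \in \mathscr{S}_{MG}$: using the identity $\mu_1 + \mu_2 = 2 + \lambda_1$, the three defining inequalities of $\mathscr{S}_{MG}$ --- $0 \leqslant \mu_1 \leqslant 3$, $0 \leqslant \mu_2 \leqslant 3$, $0 \leqslant \mu_1 + \mu_2 \leqslant 3$ --- are equivalent to $-1 \leqslant \lambda_1 + \lambda_2 \leqslant 2$, $-2 \leqslant \lambda_2 \leqslant 1$, $-2 \leqslant \lambda_1 \leqslant 1$ respectively, which are exactly the three inequalities cutting out $\mathscr{S}_{MR18b}$. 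Item iv) is the inverse map: solving $\mu_1 = 1 + \lambda_1 + \lambda_2$, $\mu_2 = 1 - \lambda_2$ for $(\lambda_1,\lambda_2)$ gives $\lambda_2 = 1 - \mu_2$ and $\lambda_1 = \mu_1 + \mu_2 - 2$, so $CT_{MG}(\lambda_1,\lambda_2)$ is equivalent to $CT_{MR18b}(\lambda_1 + \lambda_2 - 2,\, 1 - \lambda_2)$, and the range equivalence is the bijection of iii) read backwards.

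The proof involves no analytic difficulty: it is entirely an affine change of variables between $\mathscr{S}_{MR18b}$ and $\mathscr{S}_{MG}$ together with a polynomial identity among the cdf coefficients. The one point demanding care, and the reason the statement looks asymmetric, is the bookkeeping --- under $(\lambda_1,\lambda_2) \mapsto (1+\lambda_1+\lambda_2,\, 1-\lambda_2)$ the box-and-sum constraints get permuted, so the bound on $\lambda_1$ alone comes from the \emph{sum} constraint on $\mathscr{S}_{MG}$, the bound on $\lambda_1 + \lambda_2$ comes from the box constraint on $\mu_1$, and only the bound on $\lambda_2$ matches a box constraint directly. Verifying that this correspondence is consistent in both directions (iii and iv) is the main thing to get right, and it is guaranteed once one observes that the affine map is invertible with the stated inverse.
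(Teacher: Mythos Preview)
Your proposal is correct and follows essentially the same route as the paper: derive ii) first from the order-statistics mixture representation \eqref{eq:cdf-granz-0-b} and the reparametrization $\lambda_1 = 1 - 3\pi_1$, $\lambda_2 = 1 - 3\pi_2$, deduce i) by inclusion $\mathscr{S}_{R18b} \subseteq \mathscr{S}_{MR18b}$, and establish iii)--iv) by the affine change of variables $(\mu_1,\mu_2) = (1+\lambda_1+\lambda_2,\, 1-\lambda_2)$ together with the identity $\mu_1 + \mu_2 = 2 + \lambda_1$. Your additional remark that the three constraints get \emph{permuted} under this map is a nice clarification that the paper leaves implicit.
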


\begin{proof}
	\mbox{}
	\begin{enumerate}[i)]
		\item[ii)] By definition of the cdf of order statistics, the cdf from Equation \eqref{eq:cdf-granz-0} is well defined under the conditions $0 \leqslant \pi_i \leqslant 1$ for all $i=1,2,3$ and $\pi_1 + \pi_2 = 1-\pi_3 \in [0,1]$. Since $\lambda_1 = 1 - 3\pi_1$ and $\lambda_2= 1 - 3\pi_2$, we easily deduce that $-2 \leqslant \lambda_1 \leqslant 1$ and $-2 \leqslant \lambda_2 \leqslant 1$. We also have
		$$
		\lambda_1 + \lambda_2 = 2 - 3(\pi_1 + \pi_2) = 2 - 3(1 - \pi_3) = 3\pi_3 - 1
		$$
		hence $-1 \leqslant \lambda_1 + \lambda_2 \leqslant 2$. 
		
		\item[i)] It is immediate from ii) since the conditions \eqref{eq:range-r18b} set by \citet{rahman-et-al-2018b} on $\lambda_1$ and $\lambda_2$ are included in \eqref{eq:range-r18b-b}.
		
		\item[iii)] Let $(\lambda_1,\lambda_2) \in \mathscr{S}_{MR18b}$. It is easily seen that by replacing $\lambda_1$ by $\lambda_1' = 1 + \lambda_1 + \lambda_2$ and $\lambda_2$ by $\lambda_2' = 1 - \lambda_2$ in \eqref{eq:cdf-granz}, we obtain \eqref{eq:cdf-r18b}. We also easily verify that $(\lambda_1,\lambda_2) \in \mathscr{S}_{MR18b}$ if and only if $(1+\lambda_1+\lambda_2, 1-\lambda_2) \in \mathscr{S}_{MG}$. Indeed
		\begin{align*}
			(\lambda_1,\lambda_2) \in \mathscr{S}_{MR18b} & \iff \left\{ \begin{array}{c}
				-1 \leqslant \lambda_1 + \lambda_2 \leqslant 2 \\ -2 \leqslant \lambda_2 \leqslant 1 \\ -2 \leqslant \lambda_1 \leqslant 1
			\end{array} \right. \\
			& \iff \left\{ \begin{array}{c}
				0 \leqslant 1 + \lambda_1 + \lambda_2 \leqslant 3 \\ 0 \leqslant 1-\lambda_2 \leqslant 3 \\ 0 \leqslant 2+\lambda_1 \leqslant 3 
			\end{array} \right. \\
			& \iff (1+\lambda_1+\lambda_2, 1-\lambda_2) \in \mathscr{S}_{MG}
		\end{align*}
		which completes the proof because $(1+\lambda_1+\lambda_2) + (1-\lambda_2) = 2+\lambda_1$. 
		
		\item[iv)] Let $(\lambda_1,\lambda_2) \in \mathscr{S}_{MG}$. By replacing $\lambda_1$ by $\lambda_1' = \lambda_1 + \lambda_2 - 2$ and $\lambda_2$ by $\lambda_2' = 1 - \lambda_2$ in \eqref{eq:cdf-r18b}, we obtain \eqref{eq:cdf-granz}. We also easily verify that $(\lambda_1,\lambda_2) \in \mathscr{S}_{MG}$ if and only if $(\lambda_1+\lambda_2-2, 1-\lambda_2) \in \mathscr{S}_{MR18b}$. Indeed
		\begin{align*}
			(\lambda_1,\lambda_2) \in \mathscr{S}_{MG} & \iff \left\{ \begin{array}{c}
				0 \leqslant \lambda_1 + \lambda_2 \leqslant 3 \\ 0 \leqslant \lambda_2 \leqslant 3 \\ 0 \leqslant \lambda_1 \leqslant 3
			\end{array} \right. \\
			& \iff \left\{ \begin{array}{c}
				-2 \leqslant \lambda_1 + \lambda_2 - 2 \leqslant 1 \\ -2 \leqslant 1-\lambda_2 \leqslant 1 \\ -1 \leqslant \lambda_1 - 1 \leqslant 2
			\end{array} \right. \\
			& \iff (\lambda_1 + \lambda_2 - 2, 1-\lambda_2) \in \mathscr{S}_{MR18b}
		\end{align*}
		which completes the proof because $(\lambda_1 + \lambda_2 - 2) + (1-\lambda_2) = \lambda_1 - 1$. 
	\end{enumerate}
\end{proof}

\begin{remark}
	By using the method based on Formula \eqref{eq:cdf-alzaatreh}, the fact that $f_{R18b}(x)$ is a well defined pdf is equivalent to the fact that the function 
	\begin{equation} 
		\label{eq:r-r18b}
		r(t) = (1 + \lambda_1 + \lambda_2) - 2(\lambda_1 + 2\lambda_2) t + 3\lambda_2 t^2
	\end{equation}
	is positive and thus a pdf on $[0,1]$. Figure \ref{fig:verif-r18b} presents a graphical study (for $\lambda_1 \in[-3.5,1.5]$ and $\lambda_2 \in [-2.5,4.5]$ with discretization step equal to 0.1 for both parameters) of whether or not, $r(t)$ defined by Equation \eqref{eq:r-r18b} is a positive function on $[0,1]$. In this figure, the range \eqref{eq:range-r18b} proposed by \citet{rahman-et-al-2018b} is represented by a red dotted rectangular area while the modified range \eqref{eq:range-r18b-b} proposed in this paper is represented by the black triangular area. We also notice that there exist values of $\lambda_1$ and $\lambda_2$ outside the conditions \eqref{eq:range-r18b-b} for which $r(t)$ is positive but the shape of the graph does not suggest any simple expression of the constraints on these parameters apart from the conditions \eqref{eq:range-r18b-b}.
	
	\begin{figure}[!h]
		\centering
		\includegraphics[scale=0.5]{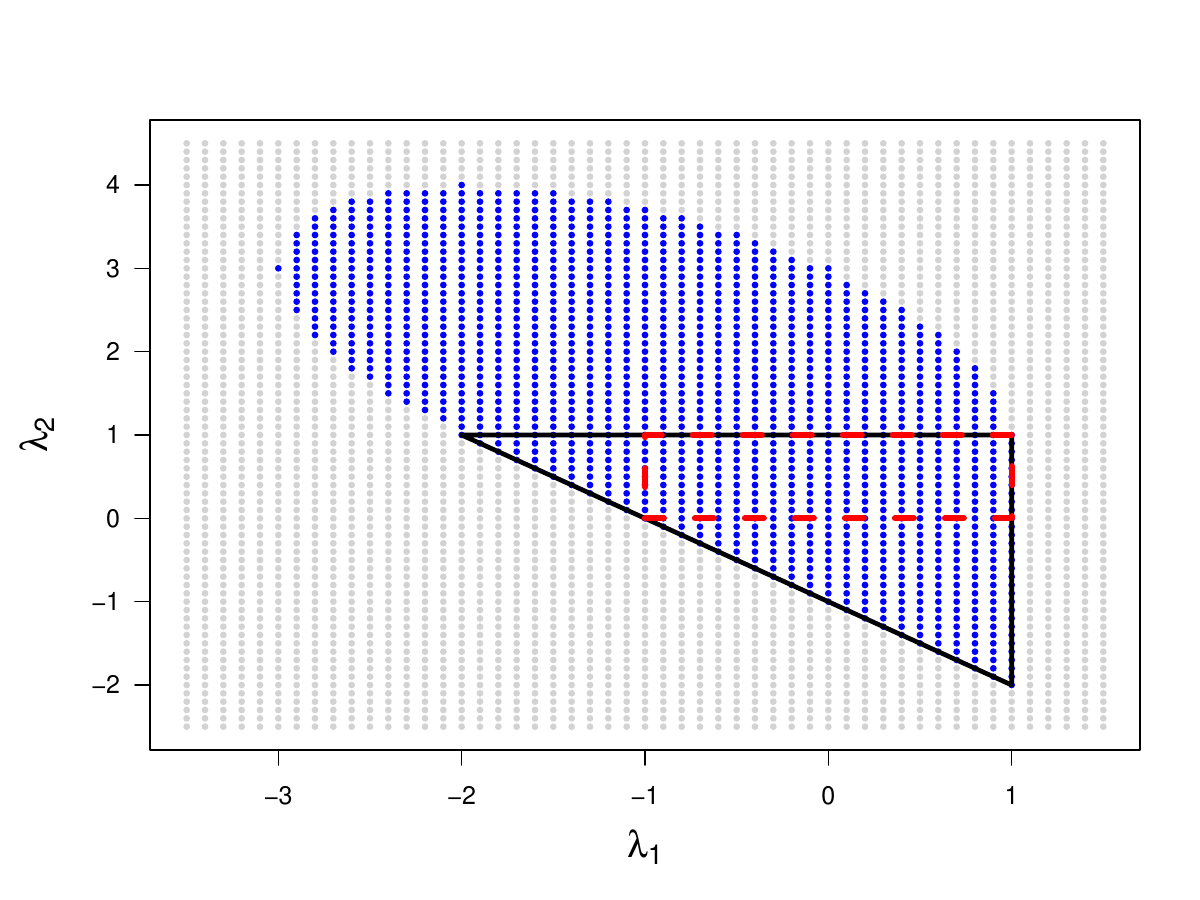}
		\caption{Graphical study for different values of $\lambda_1$ and $\lambda_2$ of whether or not $r(t) = (1 + \lambda_1 + \lambda_2) - 2(\lambda_1 + 2\lambda_2) t + 3\lambda_2 t^2$ is positive on $[0,1]$ (blue colour means "yes" and gray means "no").}
		\label{fig:verif-r18b}
	\end{figure}
\end{remark}

\begin{proposition}
	\label{prop:ra19}
	\mbox{}
	\begin{enumerate}[i)]
		\item The $CT_{R19}(\lambda)$ family of distributions defined by Equation \eqref{eq:cdf-r19} is not a sub-family of the $CT_G(\lambda_1,\lambda_2)$ family of distributions if $\lambda \neq 0$.
		\item It is rather a sub-family of the $CT_{MG}(\lambda_1,\lambda_2)$ family of distributions for $-\frac{1}{2} \leqslant \lambda \leqslant 1$. 
		\item Moreover, the $CT_{R19}(\lambda)$ distribution is well defined under the extended condition $-2 \leqslant \lambda \leqslant 1$ and will be called modified cubic Transmutation of \citet{rahman-et-al-2019b} and denoted $CT_{MR19}$.
	\end{enumerate}
\end{proposition}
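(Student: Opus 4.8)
The plan is to mirror the proof of Proposition \ref{prop:alk}, since items i)--iii) here are the exact analogues, for the formula of \citet{rahman-et-al-2019b}, of the corresponding items there. For items i) and ii) the starting point is that, comparing coefficients of $G(x)$, $G^2(x)$, $G^3(x)$ in \eqref{eq:cdf-r19} and \eqref{eq:cdf-granz}, the \emph{only} way to rewrite $F_{R19}(x)$ in the $CT_G$ form is to set $\lambda_1 = 1 - \lambda$ and $\lambda_2 = 1 + 2\lambda$ (the three resulting equations being mutually consistent). For item i) I would substitute this pair into the original range $\mathscr{S}_G$ of \eqref{eq:range-granz}: the conditions $0 \leqslant 1 - \lambda \leqslant 1$ and $-1 \leqslant 1 + 2\lambda \leqslant 1$ force $\lambda \in [0,1]$ and $\lambda \in [-1,0]$ respectively, whose intersection is $\{0\}$; hence, arguing by contradiction exactly as in Proposition \ref{prop:alk} ii), $CT_{R19}(\lambda)$ cannot be a subfamily of $CT_G$ unless $\lambda = 0$. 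For item ii) I substitute the same pair into $\mathscr{S}_{MG}$ of \eqref{eq:range-granz-b}: the three constraints become $0 \leqslant 1 - \lambda \leqslant 3$, $0 \leqslant 1 + 2\lambda \leqslant 3$ and $0 \leqslant (1-\lambda)+(1+2\lambda) = 2 + \lambda \leqslant 3$, i.e. $-2 \leqslant \lambda \leqslant 1$, $-\frac{1}{2} \leqslant \lambda \leqslant 1$ and $-2 \leqslant \lambda \leqslant 1$, whose intersection is precisely $-\frac{1}{2} \leqslant \lambda \leqslant 1$.

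For item iii) I would use the Alzaatreh representation \eqref{eq:cdf-alzaatreh}--\eqref{eq:pdf-alzaatreh}, under which $f_{R19}(x) = g(x)\, r(G(x))$ with $r(t) = (1-\lambda) + 6\lambda t - 6\lambda t^2$ on $[0,1]$. Since $\int_0^1 r(t)\,dt = 1$, only the positivity of $r$ on $[0,1]$ needs to be checked. Writing $r(t) = (1-\lambda) + 6\lambda\, t(1-t)$ and $r'(t) = 6\lambda(1 - 2t)$, the unique interior critical point is $t = \frac{1}{2}$, with $r(0) = r(1) = 1 - \lambda$ and $r(\frac{1}{2}) = 1 + \frac{\lambda}{2}$. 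A short variation-table argument (as in Proposition \ref{prop:alk} iv)) then splits into the case $\lambda \geqslant 0$, where the minimum of $r$ on $[0,1]$ is attained at the endpoints and equals $1 - \lambda$, so positivity requires $\lambda \leqslant 1$; and the case $\lambda \leqslant 0$, where the minimum is attained at $t = \frac{1}{2}$ and equals $1 + \frac{\lambda}{2}$, so positivity requires $\lambda \geqslant -2$. Combining, $r(t) \geqslant 0$ on $[0,1]$ if and only if $-2 \leqslant \lambda \leqslant 1$, which is the claimed extended condition defining $CT_{MR19}$.

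None of these steps is genuinely difficult. The only real care is needed in item iii), where the location of the minimum of $r$ switches between the endpoints and $t = \frac{1}{2}$ according to the sign of $\lambda$: getting that sign analysis right is where an error would most likely creep in. For items i)--ii) the only subtlety is keeping the three range inequalities straight and noting that the reparametrization $\lambda_1 = 1-\lambda$, $\lambda_2 = 1+2\lambda$ is forced, so there is no ``other'' way in which $F_{R19}$ could land inside the $CT_G$ (resp.\ $CT_{MG}$) family.
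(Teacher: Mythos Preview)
Your proposal is correct and follows essentially the same approach as the paper: the same reparametrization $\lambda_1 = 1-\lambda$, $\lambda_2 = 1+2\lambda$ is used for items i) and ii), with the same range checks against $\mathscr{S}_G$ and $\mathscr{S}_{MG}$, and item iii) is handled via the same Alzaatreh representation and the same sign-of-$\lambda$ variation analysis of $r(t)$ on $[0,1]$. The only (harmless) extras in your write-up are the explicit remark that the reparametrization is forced by matching coefficients, and the ``if and only if'' in iii), which the variation tables indeed justify.
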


\begin{proof} 
	\mbox{}
	\begin{enumerate}[i)]
		\item Let us first prove by absurd that the $CT_{R19}(\lambda)$ family of distributions is not a sub-family of the $CT_G(\lambda_1,\lambda_2)$ family if $\lambda \neq 0$. Let $\lambda \in [-1,1]$. Equation \eqref{eq:cdf-r19} can be rewritten under the form of Equation \eqref{eq:cdf-granz} by setting $\lambda_1 = 1-\lambda$ and $\lambda_2 = 1 + 2\lambda$. We would have 
		\begin{align*} 
			(\lambda_1, \lambda_2) = (1-\lambda, 1 + 2\lambda) \in \mathscr{S}_G & \iff \left\{ \begin{array}{c} 
				0 \leqslant 1-\lambda \leqslant 1 \\
				-1 \leqslant 1 + 2\lambda \leqslant 1 
			\end{array} \right. \\
			& \iff \left\{ \begin{array}{c} 
				0 \leqslant \lambda \leqslant 1 \\
				-1 \leqslant \lambda \leqslant 0 
			\end{array} \right. \\
			& \iff \lambda=0.
		\end{align*}
		Thus, the $CT_{R19}(\lambda)$ family of distributions is a sub-family of the $CT_G(\lambda_1,\lambda_2)$ family only if $\lambda = 0$.
		
		\item Let $\lambda \in [-1,1]$, $\lambda_1 = 1 - \lambda$ and $\lambda_2 = 1 + 2\lambda$. 
		\begin{align*} 
			\left. \begin{array}{c} 
				0 \leqslant \lambda_1 \leqslant 3 \\ 
				0 \leqslant \lambda_2 \leqslant 3 \\
				0 \leqslant \lambda_1 + \lambda_2 \leqslant 3
			\end{array} \right\} & \iff \left\{ \begin{array}{c} 
				0 \leqslant 1 - \lambda \leqslant 3 \\ 
				0 \leqslant 1 + 2\lambda \leqslant 3  \\
				0 \leqslant \lambda + 2 \leqslant 3
			\end{array} \right. \\
			& \iff \left\{ \begin{array}{c} 
				-2 \leqslant \lambda \leqslant 1 \\ 
				-\frac{1}{2} \leqslant \lambda \leqslant 1 \\
				-2 \leqslant \lambda \leqslant 1
			\end{array} \right. \\
			& \iff -\frac{1}{2} \leqslant \lambda \leqslant 1.
		\end{align*}
		
		\item Let us now prove that the $CT_{R19}$ distribution is well defined under the extended condition $-2 \leqslant \lambda \leqslant 1$. The pdf \eqref{eq:pdf-r19} is given by 
		$$
		f_{R19}(x) = g(x) \, r(G(x)),
		$$
		where for all $t \in [0,1]$, $r(t) = (1 - \lambda) + 6\lambda t - 6\lambda t^2$ and, according to \cite{alzaatreh-et-al-2013}, $f_{R19}(x)$ is well defined if $r(t)$ is a pdf on $[0,1]$. We have
		$$
		\int_{0}^{1} r(t) \dt = 1.
		$$
		So for $r(t)$ to be a pdf, it is enough to check that for all $t \in [0,1]$, $r(t) \geqslant 0$. We have $r'(t) = -6\lambda (2t-1)$. 
		\begin{list}{$\circ$}{}
			\item If $\lambda \leqslant 0$, then the variation table of $r(t)$ is as follows:
			$$
			\begin{tabvar}{|C|LCCCC|} 
				\hline t & 0 & & \frac{1}{2} & & 1 \\ 
				\hline r'(t) & & - & \barre{0} & + & \\ 
				\hline \niveau{2}{2} \TVcenter{r(t)} & \TVstretch[3pt]{1-\lambda} & \decroit & \TVstretch[3pt]{\frac{\lambda + 2}{2}} & \croit & 1-\lambda \\ 
				\hline		
			\end{tabvar}
			$$
			It is enough that $\frac{\lambda + 2}{2} \geqslant 0$ i.e. $\lambda \geqslant -2$.
			
			\item If $\lambda \geqslant 0$, then the variation table of $r(t)$ is as follows: 
			$$
			\begin{tabvar}{|C|LCCCC|} 
				\hline t & 0 & & \frac{1}{2} & & 1 \\ 
				\hline r'(t) & & + & \barre{0} & - & \\ 
				\hline \niveau{1}{2} \TVcenter{r(t)} & 1-\lambda & \croit & \TVstretch[3pt]{\frac{\lambda + 2}{2}} & \decroit & 1-\lambda \\ 
				\hline		
			\end{tabvar}
			$$
			It is enough that $1-\lambda \geqslant 0$ i.e. $\lambda \leqslant 1$.
		\end{list}
		Thus, $r(t)$ is positive on $[0,1]$ if $-2 \leqslant \lambda \leqslant 0$ or $0 \leqslant \lambda \leqslant 1$. We conclude that $r(t)$ is positive on $[0,1]$ if $-2 \leqslant \lambda \leqslant 1$.
	\end{enumerate}	
\end{proof}

\begin{remark}
	By using the method based on Formula \eqref{eq:cdf-alzaatreh}, the fact that $f_{R23}(x)$ is a well defined pdf is equivalent to the fact that the function 
	\begin{equation} 
		\label{eq:r-r23}
		r(t) = 1 - \lambda (\eta - 1) + 2\lambda (2\eta - 1) t - 3 \lambda \eta t^2
	\end{equation}
	is positive and thus a pdf on $[0,1]$. As suggested by some comments and calculations in \citep{hameldarbandi-yilmaz-2020,yilmaz-2025}, the theoretical determination of the range is far from easy when the cubic transmutation contains two parameters. We can then resort to a graphic study. Figure \ref{fig:verif-r23} presents a graphical study (for $\lambda \in [-3.2,1.2]$ and $\eta \in [-3,3]$ with discretization step equal to 0.1 for both parameters) of whether or not $r(t)$ defined by Equation \eqref{eq:r-r23} is a positive function on $[0,1]$. In this figure, the range \eqref{eq:range-r23} proposed by \citet{rahman-et-al-2023} is represented by a rectangular area in red dashes. We also notice that there exist values of $\lambda$ and $\eta$ outside the conditions \eqref{eq:range-r23} for which $r(t)$ is positive but the shape of the graph does not suggest any simple expression of the constraints on these parameters apart from the conditions \eqref{eq:range-r23}.
	
	\begin{figure}[!h]
		\centering
		\includegraphics[scale=0.5]{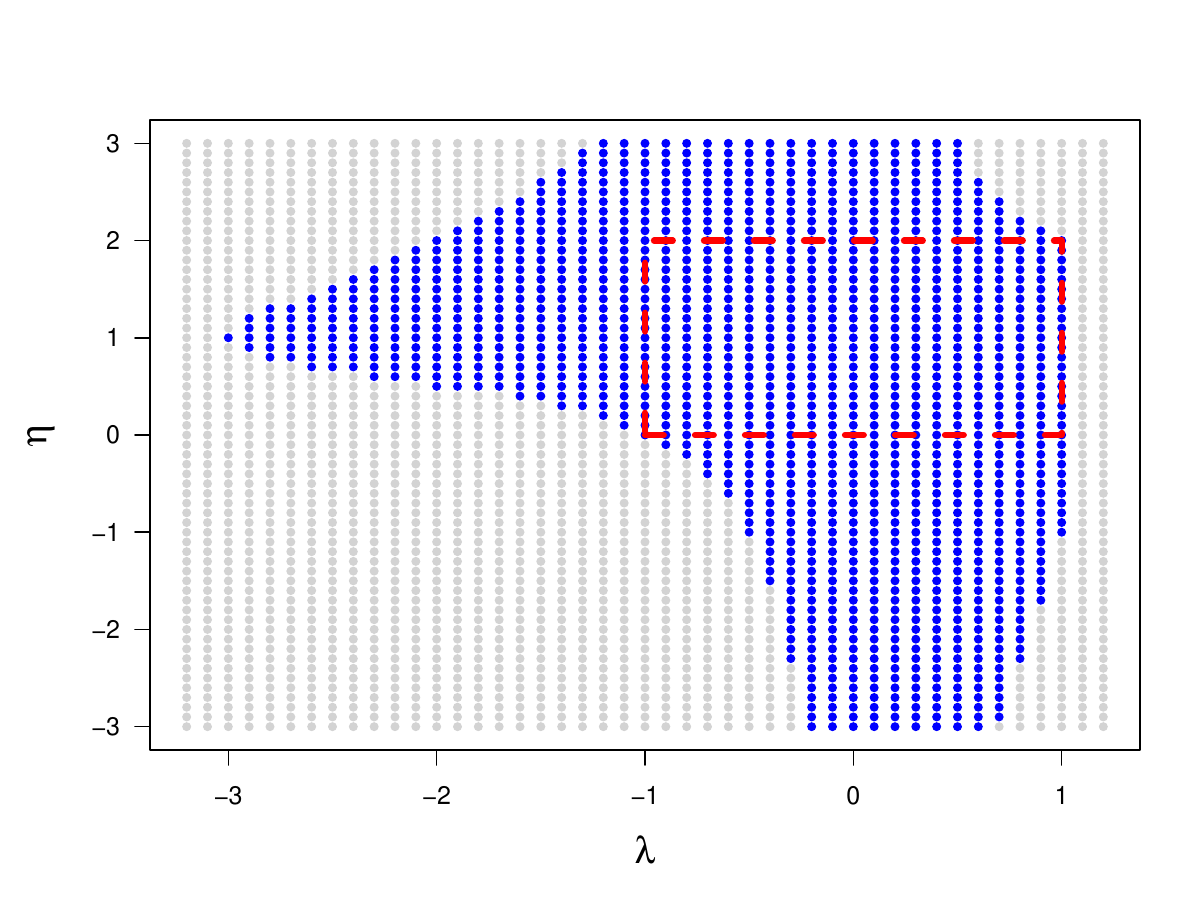}
		\caption{Graphical study for different values of $\lambda$ and $\eta$ of whether or not $r(t) = (1 + \lambda(1 - \eta)) + 2\lambda(2\eta - 1) t - 3\lambda \eta t^2$ is positive on $[0,1]$ (blue colour means "yes" and gray means "no").}
		\label{fig:verif-r23}
	\end{figure}
\end{remark}

\section{Illustration of the results using Pareto distribution}
\label{sec:illustration}

In this section, we illustrate our results using Pareto distribution as baseline distribution. We compare the Pareto distribution, transmuted Pareto (TP) distribution, the cubic transmuted Pareto (CTP) distributions and their modifications on real datasets using R software \citep{R-2024}. For this purpose, we tested several datasets including those which have been used in the past for the quadratic and cubic transmutations of Pareto distribution in \citep{merovci-puka-2014,ansari-eledum-2018,rahman-et-al-2020a,rahman-et-al-2021}. The datasets selected and presented hereafter are mainly those which illustrate the theoretical results presented earlier in this paper.

For each dataset, the selection of the best model (the distribution which best fits each dataset) is done using the classical likelihood-based model selection criteria which are the negative of the log-likelihood ($-\log L$), the Akaike Information Criterion (AIC), the Akaike Information Criterion Corrected (AICC), the Bayesian Information Criterion (BIC) respectively defined as
\begin{equation*} 
	\text{AIC} = -2 \log L^{*} + 2k, \quad \text{AICC} = \text{AIC} + \frac{2k(k+1)}{n-(k+1)}, \quad \text{BIC} = -2 \log L^{*} + k \log n,
\end{equation*}
where $k$ is the number of parameters to estimate, $n$ is the number of observations in the dataset and $\log L^{*}$ is the maximum value of the log-likelihood. For each dataset, the best model is the one with the smallest values for the criteria. 

We first present the pdfs, likelihoods and maximum likelihood estimators (MLE) linked to each CTP.

\subsection{Probability density functions and MLE}

\subsubsection{Probability density functions}

Pareto distribution is defined by its cdf
\begin{equation}
	G(x) = 1 - \left(\frac{x_0}{x} \right)^{\alpha}, \quad x \geqslant x_0,
\end{equation}
and its pdf
\begin{equation}
	g(x) = \frac{\alpha x_0^{\alpha}}{x^{\alpha+1}}, \quad x \geqslant x_0,
\end{equation}
where $x_0 \in \mathbb{R}_{+}^{*}$ and $\alpha \in \mathbb{R}_{+}^{*}$. Using the quadratic transmutation proposed by \citet{shaw-buckley-2009}, \citet{merovci-puka-2014} developed the transmuted Pareto (TP) distribution. 

Several authors have been interested in cubic transmuted Pareto distribution (CTP). Using the CT formula of \citet{alkadim-mohammed-2017}, \citet{ansari-eledum-2018} proposed a CTP distribution that we will denote $CTP_A$ and whose pdf is 
\begin{equation}
	f_A(x) = \frac{\alpha x_0^{\alpha}}{x^{\alpha + 1}} \left[ 1 - 2\lambda \left( \frac{x_0}{x} \right)^{\alpha} + 3\lambda \left( \frac{x_0}{x} \right)^{2\alpha} \right].
\end{equation}
\citet{rahman-et-al-2020a} have used the formula of \citet{rahman-et-al-2018a} and proposed a CTP distribution (that we will denote $CTP_{R18a}$) with corresponding pdf
\begin{equation}
	f_{R18a}(x) = \frac{\alpha x_0^{\alpha}}{x^{\alpha + 1}} \left[ (1 - \lambda_1 - \lambda_2) + 2(\lambda_1 + 2\lambda_2) \left( \frac{x_0}{x} \right)^{\alpha} - 3\lambda_2 \left( \frac{x_0}{x} \right)^{2\alpha} \right].
\end{equation}
\citet{rahman-et-al-2021} proposed another CTP distribution (that we will denote $CTP_{R18b}$) using the formula of \citet{rahman-et-al-2018b} and the corresponding pdf is
\begin{equation}
	f_{R18b}(x) = \frac{\alpha x_0^{\alpha}}{x^{\alpha + 1}} \left[ (1 - \lambda_1) + 2(\lambda_1 - \lambda_2) \left( \frac{x_0}{x} \right)^{\alpha} + 3\lambda_2 \left( \frac{x_0}{x} \right)^{2\alpha} \right].
\end{equation}
To the best of our knowledge, the other formulas of CT have not been applied so far to the Pareto distribution. The CTP linked to the formulas of \Citet{granzotto-et-al-2017}, \citet{rahman-et-al-2019b} and \citet{rahman-et-al-2023} will be respectively denoted $CTP_G$, $CTP_{R19}$ and $CTP_{R23}$ and their respective pdf are
\begin{equation}
	f_G(x) = \frac{\alpha x_0^{\alpha}}{x^{\alpha + 1}} \left[ (3 - \lambda_1 - \lambda_2) + 2(\lambda_1 + 2\lambda_2 - 3) \left( \frac{x_0}{x} \right)^{\alpha} + 3(1 - \lambda_2) \left( \frac{x_0}{x} \right)^{2\alpha} \right],
\end{equation}
\begin{equation}
	f_{R19}(x) = \frac{\alpha x_0^{\alpha}}{x^{\alpha + 1}} \left[ (1-\lambda) + 6\lambda \left( \frac{x_0}{x} \right)^{\alpha} - 6\lambda \left( \frac{x_0}{x} \right)^{2\alpha} \right]
\end{equation}
and
\begin{equation}
	f_{R23}(x) = \frac{\alpha x_0^{\alpha}}{x^{\alpha + 1}} \left[ (1 - \lambda) + 2\lambda (1 + \eta) \left( \frac{x_0}{x} \right)^{\alpha} - 3 \lambda \eta \left( \frac{x_0}{x} \right)^{2\alpha} \right].
\end{equation}
The modified versions of $CTP_{A}$, $CTP_{R18a}$, $CTP_{R18b}$, $CTP_{G}$ and $CTP_{R19}$ will be respectively denoted $CTP_{MA}$, $CTP_{MR18a}$, $CTP_{MR18b}$, $CTP_{MG}$ and $CTP_{MR19}$.

\subsubsection{Estimation of parameters using the maximum likelihood method}

Let $x_1, \ldots, x_n$ be a random sample of size $n$ from a given CTP distribution. The general form of the log-likelihood is
\begin{equation}
	\log L = \sum_{i=1}^n \log f(x_i),
\end{equation}
where $f$ is the pdf. 

Since for all $i=1,\ldots,n$, $x_0 \leqslant x_i$, the maximum likelihood estimator (MLE) of $x_0$ is the minimum of the sample values i.e. $\min_{1 \leqslant i \leqslant n} x_i$. The MLEs of the other parameters are obtained by maximizing the respective log-likelihoods of the different models defined below.

\begin{list}{$\bullet$}{}
	\item The log-likelihood corresponding to both the $CTP_G$ and the $CTP_{MG}$ is
	\begin{multline} 
		\log L_G(\alpha, \lambda_1, \lambda_2) = n \log(\alpha) + n\alpha \log(x_0) - (\alpha+1) \sum_{i=1}^n \log(x_i) \\
		+ \sum_{i=1}^n \log \left[ (3 - \lambda_1 - \lambda_2) + (2\lambda_1 + 4\lambda_2 - 6) \left( \frac{x_0}{x_i} \right)^{\alpha} \right. \\
		\left. + (3 - 3\lambda_2) \left( \frac{x_0}{x_i} \right)^{2\alpha} \right].
	\end{multline}
	The maximum likelihood estimates (MLE) $\hat\alpha$, $\hat\lambda_1$ and $\hat\lambda_2$ of $\alpha$, $\lambda_1$ and $\lambda_2$ are such that
	\begin{equation}
		(\hat\alpha, \hat\lambda_1, \hat\lambda_2) = \argmax_{(\alpha, \lambda_1, \lambda_2) \in \reels_+^* \times \mathscr{S}} \log L_G(\alpha, \lambda_1, \lambda_2),
	\end{equation}
	where $\mathscr{S} = \mathscr{S}_{G}$ if we consider the $CTP_G$ and $\mathscr{S} = \mathscr{S}_{MG}$ if we consider the $CTP_{MG}$.
	
	\item The log-likelihood corresponding to both the $CTP_A$ of \citet{ansari-eledum-2018} and the $CTP_{MA}$ is
	\begin{multline}
		\log L_A(\alpha, \lambda) = n \log(\alpha) + n\alpha \log(x_0) - (\alpha+1) \sum_{i=1}^n \log(x_i) \\
		+ \sum_{i=1}^n \log \left[ 1 - 2\lambda \left( \frac{x_0}{x_i} \right)^{\alpha} + 3\lambda \left( \frac{x_0}{x_i} \right)^{2\alpha} \right].
	\end{multline}
	The MLE $\hat\alpha$ and $\hat\lambda$ of $\alpha$ and $\lambda$ are such that
	\begin{equation}
		(\hat\alpha, \hat\lambda) = \argmax_{(\alpha, \lambda) \in \reels_+^* \times \mathscr{S}} \log L_A(\alpha, \lambda),
	\end{equation}
	where $\mathscr{S} = [-1,1]$ if we consider the $CTP_A$ and $\mathscr{S} = [-1,3]$ if we consider the $CTP_{MA}$.
	
	\item The log-likelihood corresponding to both $CTP_{R18a}$ \citep{rahman-et-al-2020a} and $CTP_{MR18a}$ is
	\begin{multline}
		\log L_{R18a}(\alpha, \lambda_1, \lambda_2) = n \log(\alpha) + n\alpha \log(x_0) - (\alpha+1) \sum_{i=1}^n \log(x_i) \\
		+ \sum_{i=1}^n \log \left[ (1 - \lambda_1 - \lambda_2) + 2(\lambda_1 + 2\lambda_2) \left( \frac{x_0}{x_i} \right)^{\alpha} - 3\lambda_2 \left( \frac{x_0}{x_i} \right)^{2\alpha} \right].
	\end{multline}
	The MLE $\hat\alpha$, $\hat\lambda_1$ and $\hat\lambda_2$ of $\alpha$, $\lambda_1$ and $\lambda_2$ are such that
	\begin{equation}
		(\hat\alpha, \hat\lambda_1, \hat\lambda_2) = \argmax_{(\alpha, \lambda_1, \lambda_2) \in \reels_+^* \times \mathscr{S}} \log L_{R18a}(\alpha, \lambda_1, \lambda_2),
	\end{equation}
	where $\mathscr{S} = \mathscr{S}_{R18a}$ if we consider the $CTP_{R18a}$ and $\mathscr{S} = \mathscr{S}_{MR18a}$ if we consider the $CTP_{MR18a}$.
	
	\item The log-likelihood corresponding to both $CTP_{R18b}$ of \citet{rahman-et-al-2021} and $CTP_{MR18b}$ is 
	\begin{multline}
		\log L_{R18b}(\alpha, \lambda_1, \lambda_2) = n \log(\alpha) + n\alpha \log(x_0) - (\alpha+1) \sum_{i=1}^n \log(x_i) \\
		+ \sum_{i=1}^n \log \left[ (1 - \lambda_1) + 2(\lambda_1 - \lambda_2) \left( \frac{x_0}{x_i} \right)^{\alpha} + 3\lambda_2 \left( \frac{x_0}{x_i} \right)^{2\alpha} \right].
	\end{multline}
	The MLE $\hat\alpha$, $\hat\lambda_1$ and $\hat\lambda_2$ of $\alpha$, $\lambda_1$ and $\lambda_2$ are such that
	\begin{equation}
		(\hat\alpha, \hat\lambda_1, \hat\lambda_2) = \argmax_{(\alpha, \lambda_1, \lambda_2) \in \reels_+^* \times \mathscr{S}} \log L_{R18b}(\alpha, \lambda_1, \lambda_2),
	\end{equation}
	where $\mathscr{S} = \mathscr{S}_{R18b}$ if we consider the $CTP_{R18b}$ and $\mathscr{S} = \mathscr{S}_{MR18b}$ if we consider the $CTP_{MR18b}$.
	
	\item The log-likelihood corresponding to both $CTP_{R19}$ and $CTP_{MR19}$ is
	\begin{multline}
		\log L_{R19}(\alpha, \lambda) = n \log(\alpha) + n\alpha \log(x_0) - (\alpha+1) \sum_{i=1}^n \log(x_i) \\
		+ \sum_{i=1}^n \log \left[ (1-\lambda) + 6\lambda \left( \frac{x_0}{x_i} \right)^{\alpha} - 6\lambda \left( \frac{x_0}{x_i} \right)^{2\alpha} \right].
	\end{multline}
	The MLE $\hat\alpha$ and $\hat\lambda$ of $\alpha$ and $\lambda$ are such that
	\begin{equation}
		(\hat\alpha, \hat\lambda) = \argmax_{(\alpha, \lambda) \in \reels_+^* \times \mathscr{S}} \log L_{R19}(\alpha, \lambda),
	\end{equation}
	where $\mathscr{S} = [-1,1]$ if we consider the $CTP_{R19}$ and $\mathscr{S} = [-2,1]$ if we consider the $CTP_{MR19}$.
	
	\item The log-likelihood corresponding to $CTP_{R23}$ is
	\begin{multline}
		\log L_{R23}(\alpha, \lambda, \eta) = n \log(\alpha) + n\alpha \log(x_0) - (\alpha+1) \sum_{i=1}^n \log(x_i) \\
		+ \sum_{i=1}^n \log \left[ (1 - \lambda) + 2\lambda (1 + \eta) \left( \frac{x_0}{x_i} \right)^{\alpha} - 3 \lambda \eta \left( \frac{x_0}{x_i} \right)^{2\alpha} \right].
	\end{multline}
	The MLE $\hat\alpha$, $\hat\lambda$ and $\hat\eta$ of $\alpha$, $\lambda$ and $\eta$ are such that
	\begin{equation}
		(\hat\alpha, \hat\lambda, \hat\eta) = \argmax_{(\alpha, \lambda, \eta) \in \reels_+^* \times [-1,1] \times [0,2]} \log L_{R23}(\alpha, \lambda, \eta).
	\end{equation}
\end{list}

All these constrained optimization problems are difficult to solve analytically and therefore require the use of a numerical optimization algorithm. One of the requirements for such algorithms is that they take inequality constraints into account. For our real data analysis hereafter, we will use the R function \textbf{constrOptim} which contains optimization algorithms integrating inequality constraints. 

\subsection{Floyd river dataset}

The Floyd data concerns the flood rates of the Floyd River (USA) from the years 1935 to 1973 (see \citet{mudholkar-hutson-1996} for more details on these data). Descriptive statistics for the Floyd data are summarized in Table \ref{tab:floyd-stats}. 

\begin{table}[!h]
	\centering
	\caption{Descriptive statistics for the Floyd dataset}
	\label{tab:floyd-stats}
	\begin{tabularx}{0.8\textwidth}{*{6}{Y}}
		\hline Min & $Q_1$ & Median & Mean & $Q_3$ & Max \\
		\hline 318 & 1590 & 3570 & 6771 & 6725 & 71500  \\ 
		\hline 
	\end{tabularx}
\end{table} 

\subsubsection{Fitting with the initial (unmodified) distributions}

Parameter estimates for the unmodified distributions are reported in Table \ref{tab:floyd-results-1} and Table \ref{tab:floyd-comp-1} gives the different criteria for the unmodified models.

\begin{table}[!h]
	\centering
	\caption{Parameters estimated for unmodified models on the Floyd dataset}
	\label{tab:floyd-results-1}
	\begin{tabularx}{0.8\textwidth}{cYYcY}
		\hline Distributions & \multicolumn{4}{c}{Estimations} \\ 
		\hline $CTP_{G}$ & $x_0 = 318$ & $\hat\alpha = 0.808$ & $\hat\lambda_1 = 0.104$ & $\hat\lambda_2 = -1$ \\ 
		$CTP_{A}$ & $x_0 = 318$ & $\hat\alpha = 0.436$ & $\hat\lambda = -0.876$ & \\                  
		$CTP_{R18a}$ & $x_0 = 318$ & $\hat\alpha = 0.718$ & $\hat\lambda_1 = -0.908$ & $\hat\lambda_2 = -1$ \\
		$CTP_{R18b}$ & $x_0 = 318$ & $\hat\alpha = 0.602$ & $\hat\lambda_1 = -1$ & $\hat\lambda_2 = 0.138$ \\
		$CTP_{R19}$ & $x_0 = 318$ & $\hat\alpha = 0.339$ & $\hat\lambda = 0.901$ & \\                  
		$CTP_{R23}$ & $x_0 = 318$ & $\hat\alpha = 0.33$ & $\hat\lambda = 1$ & $\hat\eta = 1.845$ \\          
		TP & $x_0 = 318$ & $\hat\alpha = 0.586$ & $\hat\lambda = -0.91$ & \\                   
		Pareto & $x_0 = 318$ & $\hat\alpha = 0.412$ & & \\
		\hline 
	\end{tabularx}
\end{table} 

\begin{table}[!h]
	\centering
	\caption{Criteria for comparing unmodified models on the Floyd dataset}
	\label{tab:floyd-comp-1}
	\begin{tabularx}{0.8\textwidth}{cYYYY}
		\hline Distributions & $-\log L^{*}$ & AIC & AICC & BIC  \\
		\hline  
		$CTP_{G}$    & $375.626^{(1)}$ & $757.252^{(1)}$ & $757.938^{(1)}$ & $762.243^{(1)}$ \\
		$CTP_{R18a}$ & $380.665^{(2)}$ & $767.330^{(2)}$ & $768.016^{(2)}$ & $772.321^{(2)}$ \\
		$CTP_{R23}$  & $384.719^{(3)}$ & $775.438^{(5)}$ & $776.124^{(5)}$ & $780.429^{(5)}$ \\
		$CTP_{R18b}$ & $384.915^{(4)}$ & $775.830^{(6)}$ & $776.516^{(6)}$ & $780.821^{(6)}$ \\
		$CTP_{R19}$  & $385.161^{(5)}$ & $774.322^{(3)}$ & $774.655^{(3)}$ & $777.649^{(3)}$ \\
		TP           & $385.349^{(6)}$ & $774.698^{(4)}$ & $775.031^{(4)}$ & $778.025^{(4)}$ \\
		$CTP_{A}$    & $387.052^{(7)}$ & $778.104^{(7)}$ & $778.437^{(7)}$ & $781.431^{(7)}$ \\
		Pareto       & $392.810^{(8)}$ & $787.620^{(8)}$ & $787.728^{(8)}$ & $789.284^{(8)}$ \\
		\hline
	\end{tabularx}
\end{table}

According to the results in Table \ref{tab:floyd-comp-1}, it seems that the distribution that best fits the Wheaton dataset is $CTP_G$. But in this case too, the $CTP_G$ distribution is not a true probability distribution in the rigorous sense of the term. Figure \ref{fig:floyd-granz} represents the fitted cdf (left) and pdf (right) from unmodified Granzotto cubic transmutation ($CTP_G$) for Floyd data for $x \in [318,850]$. We notice that the cdf is negative in the interval $[372.4243,640.6569]$ and the pdf is negative in the interval $[341.1465,505.9279]$.

\begin{figure}[!h]
	\centering
	\includegraphics[scale=0.35]{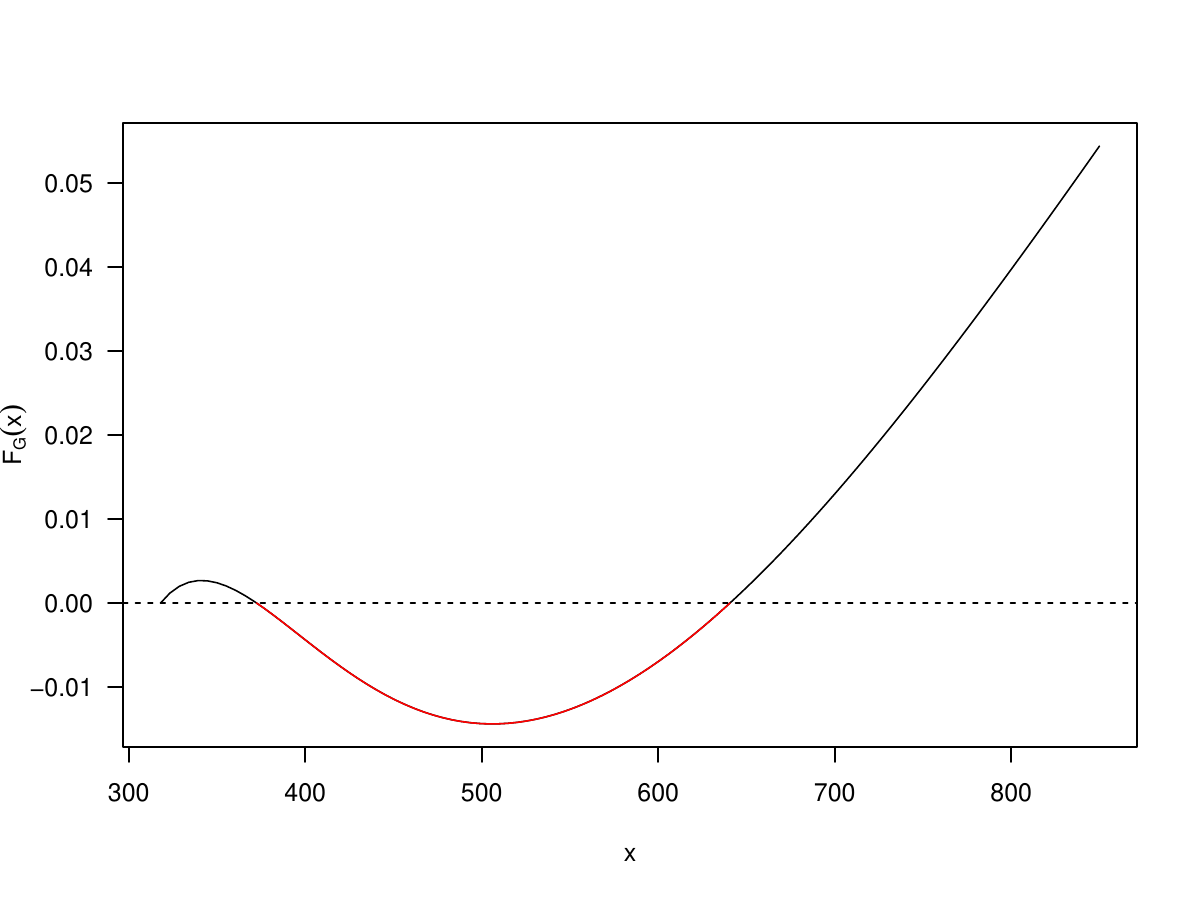}
	\includegraphics[scale=0.35]{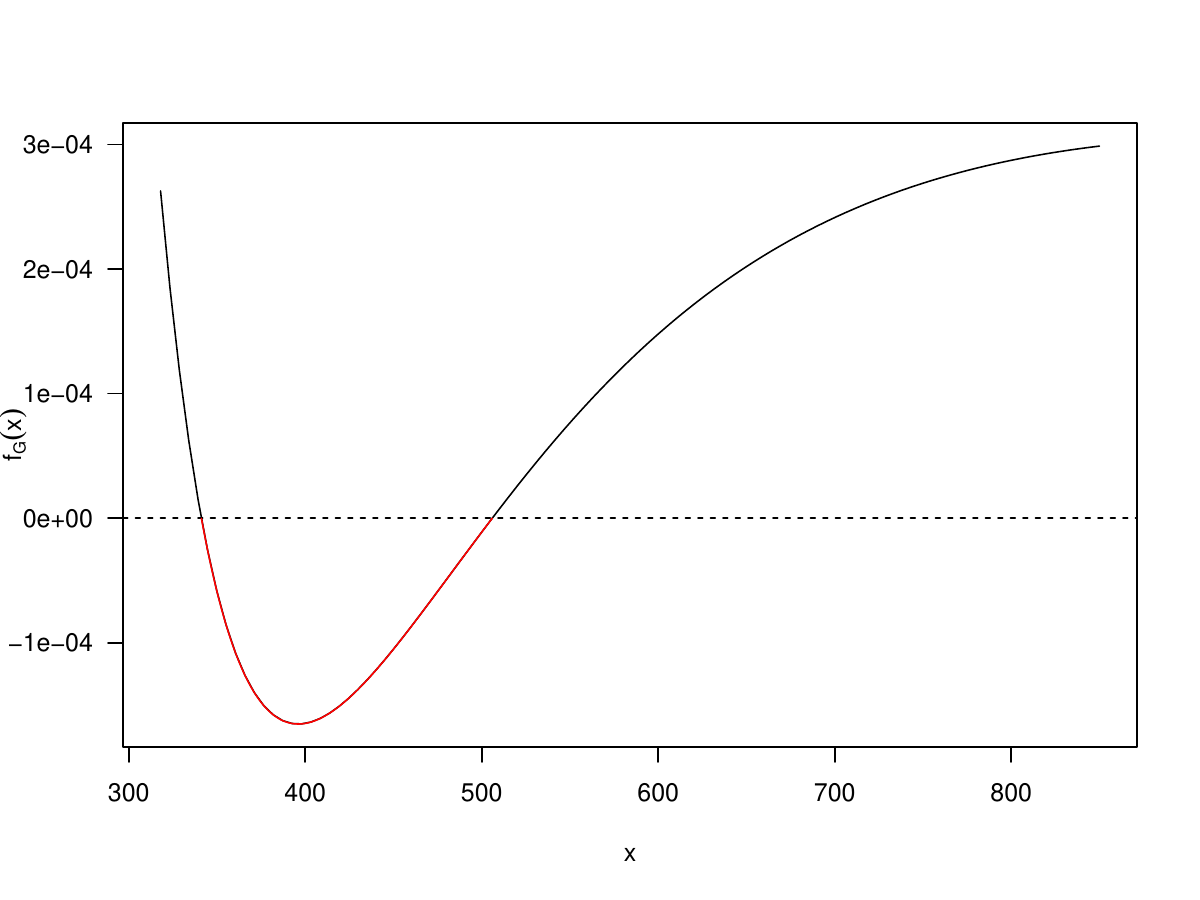}
	\caption{The fitted cdf (left) and pdf (right) from unmodified Granzotto cubic transmutation ($CTP_G$) for Floyd data}
	\label{fig:floyd-granz}
\end{figure}

\subsubsection{Fitting with modified distributions}

Table \ref{tab:floyd-results-2} summarizes the estimated parameter values for the modified models, while Table \ref{tab:floyd-comp-2} presents the model comparison criteria.

\begin{table}[!h]
	\centering
	\caption{Estimated parameters for models modified on the Floyd dataset}
	\label{tab:floyd-results-2}
	\begin{tabularx}{0.8\textwidth}{cYYcY}
		\hline Distributions & \multicolumn{4}{c}{Estimations} \\ 
		\hline 
		$CTP_{MG}$    & $x_0 = 318$ & $\hat\alpha = 0.719$ & $\hat\lambda_1 = 0.092$ & $\hat\lambda_2 = 0$ \\
		$CTP_{MA}$    & $x_0 = 318$ & $\hat\alpha = 0.436$ & $\hat\lambda = -0.876$ & \\   
		$CTP_{MR18a}$ & $x_0 = 318$ & $\hat\alpha = 0.719$ & $\hat\lambda_1 = -0.908$ & $\hat\lambda_2 = -1$ \\
		$CTP_{MR18b}$ & $x_0 = 318$ & $\hat\alpha = 0.718$ & $\hat\lambda_1 = -1.908$ & $\hat\lambda_2 = 1$ \\
		$CTP_{MR19}$  & $x_0 = 318$ & $\hat\alpha = 0.339$ & $\hat\lambda = 0.901$ & \\                
		$CTP_{R23}$   & $x_0 = 318$ & $\hat\alpha = 0.33$ & $\hat\lambda = 1$ & $\hat\eta = 1.845$ \\      
		TP            & $x_0 = 318$ & $\hat\alpha = 0.586$ & $\hat\lambda = -0.91$ & \\                  
		Pareto        & $x_0 = 318$ & $\hat\alpha = 0.412$ & & \\
		\hline 
	\end{tabularx}
\end{table} 

\begin{table}[!h]
	\centering
	\caption{Criteria for modified models on the Floyd dataset (ranks in parentheses)}
	\label{tab:floyd-comp-2}
	\begin{tabularx}{0.8\textwidth}{cYYYY}
		\hline Distributions & $-\log L^{*}$ & AIC & AICC & BIC  \\
		\hline  
		$CTP_{MG}$    & $380.665^{(1)}$ & $767.330^{(1)}$ & $768.016^{(1)}$ & $772.321^{(1)}$ \\
		$CTP_{MR18a}$ & $380.665^{(1)}$ & $767.330^{(1)}$ & $768.016^{(1)}$ & $772.321^{(1)}$ \\
		$CTP_{MR18b}$ & $380.665^{(1)}$ & $767.330^{(1)}$ & $768.016^{(1)}$ & $772.321^{(1)}$ \\
		$CTP_{R23}$   & $384.719^{(4)}$ & $775.438^{(6)}$ & $776.124^{(6)}$ & $780.429^{(6)}$ \\
		$CTP_{MR19}$  & $385.161^{(5)}$ & $774.322^{(4)}$ & $774.655^{(4)}$ & $777.649^{(4)}$ \\
		TP            & $385.349^{(6)}$ & $774.698^{(5)}$ & $775.031^{(5)}$ & $778.025^{(5)}$ \\
		$CTP_{MA}$    & $387.052^{(7)}$ & $778.104^{(7)}$ & $778.437^{(7)}$ & $781.431^{(7)}$ \\
		Pareto        & $392.810^{(8)}$ & $787.620^{(8)}$ & $787.728^{(8)}$ & $789.284^{(8)}$ \\
		\hline
	\end{tabularx}
\end{table}

A comparison between Tables \ref{tab:floyd-results-1} and \ref{tab:floyd-results-2} shows that the parameters of the $CTP_{G}$ and $CTP_{R18b}$ distributions have changed, with the best-fitting distributions now being $CTP_{MG}$, $CTP_{MR18a}$ and $CTP_{MR18b}$. The modifications introduced have improved the $CTP_{G}$, ensuring it is well defined as a probability distribution and have elevated the $CTP_{R18b}$ from the 4th place to 1st (tied).

Figure \ref{fig:floyd} displays the histogram of the Floyd dataset along with the fitted densities of the modified models.

\begin{figure}[!h]
	\centering
	\includegraphics[scale=0.5]{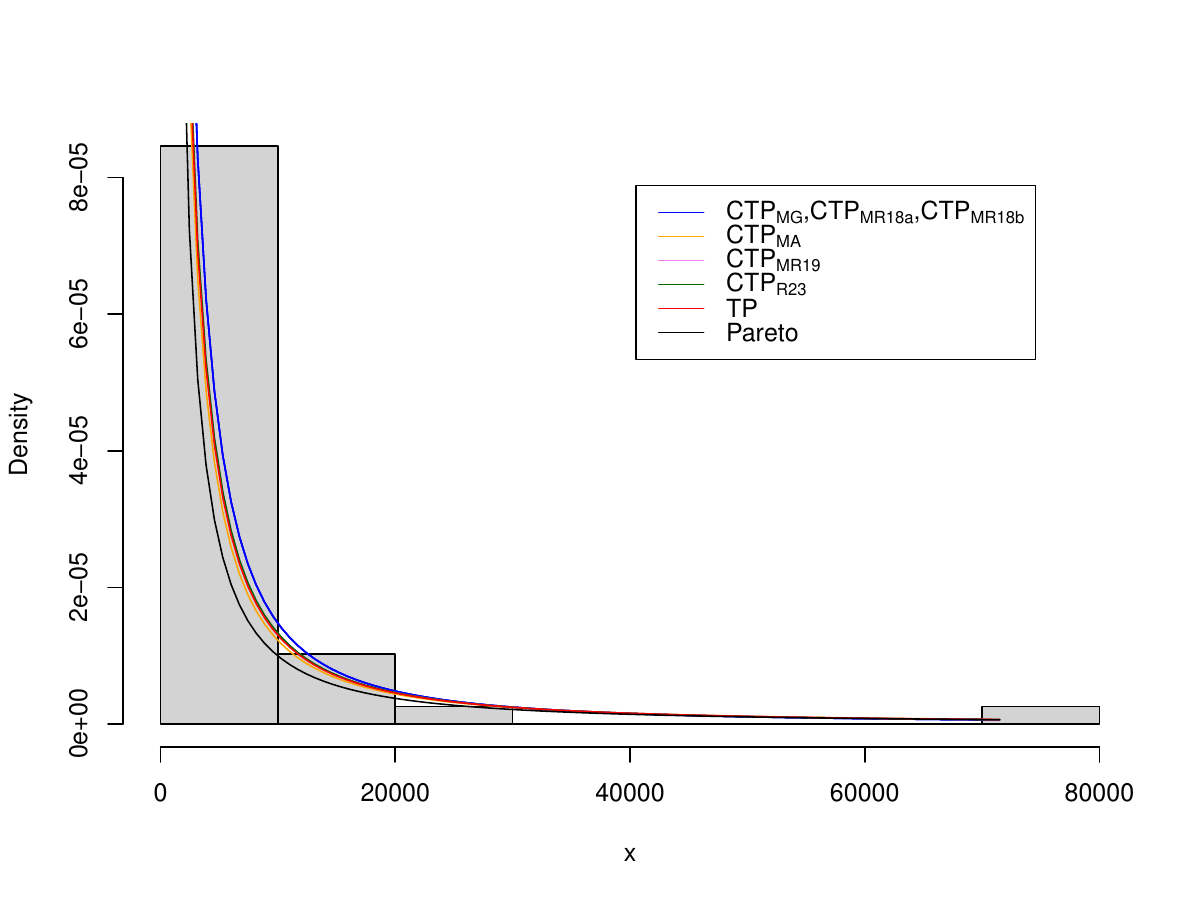}
	\caption{Graphical comparison of fitting densities of modified models on the Floyd dataset}
	\label{fig:floyd}
\end{figure}

\subsection{Population size of major urban agglomerations} 

The Pareto distribution has been applied by \citet{barranco-jimenez-2012} and \citet{bedbur-et-al-2019} to the modelling of the size of the hundred (100) largest urban agglomerations in the world at the respective reference dates 01/01/2010 and 01/01/2016. The data are provided by \citet{brinkhoff-2024} through the website \url{https://www.citypopulation.de/en/world/agglomerations/}. We will compare the different distributions considered in this paper for fitting the population size of the hundred (100) largest urban agglomerations in the world at the reference date 01/01/2024. The descriptive statistics for this dataset are reported in Table \ref{tab:pop-stats}.

\begin{table}[!h]
	\centering
	\caption{Descriptive statistics for Population size data}
	\label{tab:pop-stats}
	\begin{tabularx}{0.8\textwidth}{*{6}{Y}}
		\hline Min & $Q_1$ & Median & Mean & $Q_3$ & Max \\
		\hline \numprint{5300000} & \numprint{6775000} & \numprint{9325000} & \numprint{12801000} & \numprint{15675000} & \numprint{70100000} \\ 
		\hline 
	\end{tabularx}
\end{table} 

\subsubsection{Fitting with the initial (unmodified) distributions}

Parameter estimates for the unmodified distributions are reported in Table \ref{tab:pop-results-1} and the different criteria for the unmodified models are reported in Table \ref{tab:pop-comp-1}.

\begin{table}[!h]
	\centering
	\caption{Parameters estimated for unmodified CTP distributions on the Population size dataset}
	\label{tab:pop-results-1}
	\begin{tabularx}{0.8\textwidth}{cYccc}
		\hline Distributions & \multicolumn{4}{c}{Estimations} \\ 
		\hline $CTP_{G}$ & $x_0 = 5300000$ & $\hat\alpha = 1.884$ & $\hat\lambda_1 = 0.534$ & $\hat\lambda_2 = 0.7$ \\ 
		$CTP_{A}$ & $x_0 = 5300000$ & $\hat\alpha = 1.513$ & $\hat\lambda = -0.455$ & \\ 
		$CTP_{R18a}$ & $x_0 = 5300000$ & $\hat\alpha = 0.949$ & $\hat\lambda_1 = 0.003$ & $\hat\lambda_2 = 0.997$ \\ 
		$CTP_{R18b}$ & $x_0 = 5300000$ & $\hat\alpha = 1.884$ & $\hat\lambda_1 = -0.766$ & $\hat\lambda_2 = 0.3$ \\ 
		$CTP_{R19}$ & $x_0 = 5300000$ & $\hat\alpha = 1.295$ & $\hat\lambda = 0.391$ & \\ 
		$CTP_{R23}$ & $x_0 = 5300000$ & $\hat\alpha = 0.969$ & $\hat\lambda = 1$ & $\hat\eta = 1.102$ \\ 
		TP & $x_0 = 5300000$ & $\hat\alpha = 1.787$ & $\hat\lambda = -0.526$ & \\ 
		Pareto & $x_0 = 5300000$ & $\hat\alpha = 1.42$ & & \\ 
		\hline 
	\end{tabularx}
\end{table} 

\begin{table}[!h]
	\centering
	\caption{Criteria for comparing unmodified models on the population size dataset}
	\label{tab:pop-comp-1}
	\begin{tabularx}{0.8\textwidth}{cYYYY}
		\hline Distributions & $-\log L^{*}$ & AIC & AICC & BIC  \\
		\hline  
		$CTP_{R23}$  & $1681.333^{(1)}$ & $3368.666^{(2)}$ & $3368.916^{(3)}$ & $3376.482^{(5)}$ \\ 
		$CTP_{R18a}$ & $1681.370^{(2)}$ & $3368.740^{(4)}$ & $3368.990^{(4)}$ & $3376.556^{(6)}$ \\ 
		$CTP_{G}$    & $1681.422^{(3)}$ & $3368.844^{(5)}$ & $3369.094^{(5)}$ & $3376.660^{(7)}$ \\ 
		$CTP_{R18b}$ & $1681.422^{(3)}$ & $3368.844^{(5)}$ & $3369.094^{(5)}$ & $3376.660^{(7)}$ \\ 
		TP           & $1681.578^{(5)}$ & $3367.156^{(1)}$ & $3367.280^{(1)}$ & $3372.366^{(2)}$ \\ 
		$CTP_{A}$    & $1682.365^{(6)}$ & $3368.730^{(3)}$ & $3368.854^{(2)}$ & $3373.940^{(3)}$ \\ 
		$CTP_{R19}$  & $1682.764^{(7)}$ & $3369.528^{(8)}$ & $3369.652^{(8)}$ & $3374.738^{(4)}$ \\ 
		P            & $1683.638^{(8)}$ & $3369.276^{(7)}$ & $3369.317^{(7)}$ & $3371.881^{(1)}$ \\
		\hline
	\end{tabularx}
\end{table}

According to the results in Table \ref{tab:pop-comp-1}, it seems that the distribution that best fits the population size dataset is $CTP_{R23}$. In this case, the estimated parameters $\hat\lambda_1$ and $\hat\lambda_2$ for the $CTP_G$ are positive so the latter is a probability distribution in the rigorous sense of the term.

\subsubsection{Fitting with modified distributions}

Table \ref{tab:pop-results-2} gives the estimated values of the parameters of the modified models and Table \ref{tab:pop-comp-2} gives the different comparison criteria.

\begin{table}[!h]
	\centering
	\caption{Estimated parameters for modified CTP models on the Population size dataset}
	\label{tab:pop-results-2}
	\begin{tabularx}{0.8\textwidth}{cYccc}
		\hline Distributions & \multicolumn{4}{c}{Estimations} \\ 
		\hline 
		$CTP_{MG}$ & $x_0 = 5300000$ & $\hat\alpha = 0.969$ & $\hat\lambda_1 = 0.898$ & $\hat\lambda_2 = 2.102$ \\ 
		$CTP_{MA}$ & $x_0 = 5300000$ & $\hat\alpha = 1.513$ & $\hat\lambda = -0.455$ & \\ 
		$CTP_{MR18a}$ & $x_0 = 5300000$ & $\hat\alpha = 0.969$ & $\hat\lambda_1 = -0.103$ & $\hat\lambda_2 = 1.102$ \\ 
		$CTP_{MR18b}$ & $x_0 = 5300000$ & $\hat\alpha = 0.969$ & $\hat\lambda_1 = 1$ & $\hat\lambda_2 = -1.102$ \\ 
		$CTP_{MR19}$ & $x_0 = 5300000$ & $\hat\alpha = 1.295$ & $\hat\lambda = 0.391$ & \\ 
		$CTP_{R23}$ & $x_0 = 5300000$ & $\hat\alpha = 0.969$ & $\hat\lambda = 1$ & $\hat\eta = 1.102$ \\ 
		TP & $x_0 = 5300000$ & $\hat\alpha = 1.787$ & $\hat\lambda = -0.526$ & \\ 
		Pareto & $x_0 = 5300000$ & $\hat\alpha = 1.42$ & & \\ 
		\hline 
	\end{tabularx}
\end{table} 

\begin{table}[!h]
	\centering
	\caption{Criteria for modified CTP models on the Population size dataset (ranks in parentheses)}
	\label{tab:pop-comp-2}
	\begin{tabularx}{0.8\textwidth}{cYYYY}
		\hline Distributions & $-\log L^{*}$ & AIC & AICC & BIC  \\
		\hline  
		$CTP_{MG}$    & $1681.333^{(1)}$ & $3368.666^{(2)}$ & $3368.916^{(3)}$ & $3376.482^{(5)}$ \\ 
		$CTP_{MR18a}$ & $1681.333^{(1)}$ & $3368.666^{(2)}$ & $3368.916^{(3)}$ & $3376.482^{(5)}$ \\ 
		$CTP_{MR18b}$ & $1681.333^{(1)}$ & $3368.666^{(2)}$ & $3368.916^{(3)}$ & $3376.482^{(5)}$ \\ 
		$CTP_{R23}$   & $1681.333^{(1)}$ & $3368.666^{(2)}$ & $3368.916^{(3)}$ & $3376.482^{(5)}$ \\ 
		TP            & $1681.578^{(5)}$ & $3367.156^{(1)}$ & $3367.280^{(1)}$ & $3372.366^{(2)}$ \\ 
		$CTP_{MA}$    & $1682.365^{(6)}$ & $3368.730^{(6)}$ & $3368.854^{(2)}$ & $3373.940^{(3)}$ \\ 
		$CTP_{MR19}$  & $1682.764^{(7)}$ & $3369.528^{(8)}$ & $3369.652^{(8)}$ & $3374.738^{(4)}$ \\ 
		Pareto        & $1683.638^{(8)}$ & $3369.276^{(7)}$ & $3369.317^{(7)}$ & $3371.881^{(1)}$ \\ 
		\hline
	\end{tabularx}
\end{table}

A comparison between Tables \ref{tab:pop-results-1} and \ref{tab:pop-results-2} allows us to notice that the parameters of the $CTP_{G}$, $CTP_{R18a}$ and $CTP_{R18b}$ distributions have changed and the distributions which best fit the data are $CTP_{R23}$, $CTP_{MG}$, $CTP_{MR18a}$ and $CTP_{MR18b}$. The corrections made in this paper have improved the $CTP_{G}$, $CTP_{R18a}$ and $CTP_{R18b}$ distributions.

Figure \ref{fig:pop} shows the histogram of Population size data and fitted pdfs of the modified distributions.

\begin{figure}[!h]
	\centering
	\includegraphics[scale=0.5]{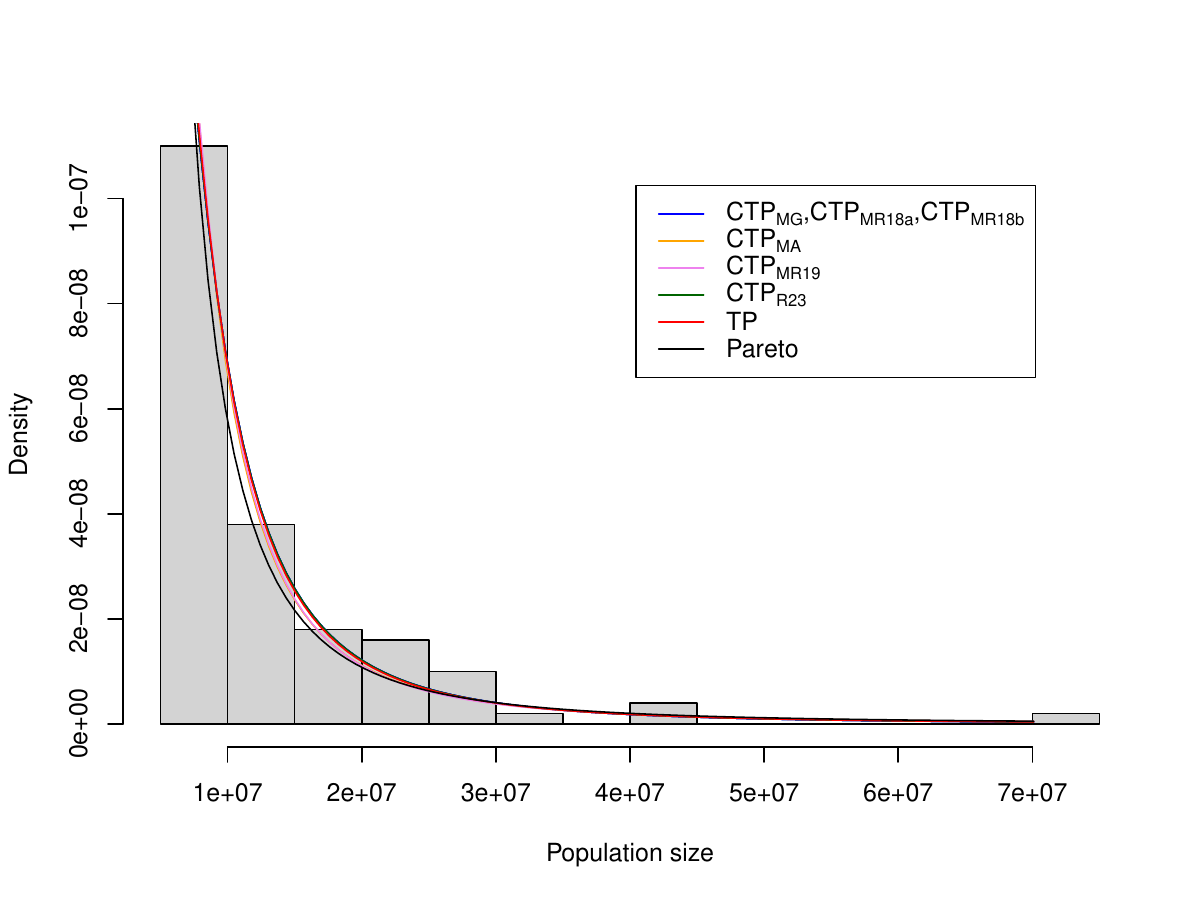}
	\caption{Graphical comparison of the fitting of CTP densities to the Population size dataset}
	\label{fig:pop}
\end{figure}

\section{Discussion and concluding remarks}
\label{sec:conclusion}

The extension of probability distributions is a broad field of research that has gained significant importance over the last two decades, attracting considerable interest from researchers. In this paper, we focused on the transmutation of probability distributions, one of the most widely used techniques for extending probability models.

The first form of transmutation, the quadratic transmutation introduced by \citet{shaw-buckley-2009}, is unique. However, the cubic transmutation developed later is not, leading to multiple formulations in the literature. In this study, we provided a unified definition of cubic transmutation and examined six cubic transmutation formulas proposed by \citet{granzotto-et-al-2017}, \citet{alkadim-mohammed-2017}, \citet{rahman-et-al-2018a}, \citet{rahman-et-al-2018b}, \citet{rahman-et-al-2019b} and \citet{rahman-et-al-2023}. These transmuted cubic distributions involve one or two additional parameters, subject to box and/or linear inequality constraints, in addition to the parameters of the baseline distribution.

Our theoretical and empirical analyses revealed that the cubic transmutation formula of \citet{granzotto-et-al-2017} ($CTP_G$) is not always well defined. Specifically, the constraints imposed on the additional parameters do not always ensure that the resulting cumulative distribution function (cdf) is valid. We demonstrated this issue both theoretically and through empirical illustrations using two real datasets. 

To address these limitations, we proposed modified versions of the models by \citet{granzotto-et-al-2017}, \citet{alkadim-mohammed-2017}, \citet{rahman-et-al-2018a}, \citet{rahman-et-al-2018b} and \citet{rahman-et-al-2019b}, adjusting the parameter ranges to ensure well-defined cdfs. Additionally, we showed that by refining these constraints, the cubic transmutation formulas of \citet{granzotto-et-al-2017}, \citet{rahman-et-al-2018a} and \citet{rahman-et-al-2018b} become equivalent. 

Finally, our real-data applications, using the Pareto distribution as the baseline, confirmed our theoretical findings. The modified models systematically outperformed their unmodified counterparts, reinforcing the necessity of using the improved versions proposed in this paper.

{
	\small 
	\bibliographystyle{apalike} 
	\bibliography{main}
}

\end{document}